\documentclass[sigconf, nonacm]{acmart}
\usepackage{graphicx}
\usepackage[ruled, vlined, linesnumbered]{algorithm2e}
\usepackage{balance}
\usepackage{caption}
\usepackage{subcaption}
\usepackage{multirow}	
\usepackage{mathtools}
\usepackage{enumerate}
\usepackage{bbm}
\usepackage{url}
\usepackage{color,xcolor}
\usepackage{xspace}
\usepackage{afterpage}
\usepackage{enumitem}
\usepackage{titlesec}
\usepackage{relsize}

\hypersetup{colorlinks=true, citecolor=red, filecolor=blue, linkcolor=blue, urlcolor=blue, bookmarksdepth=3}

\renewcommand{\abstractname}{ABSTRACT}

\newlength{\oldtextfloatsep}
\newlength{\oldfloatsep}

\newcommand{\USim}{USim}

\newcommand{\RomanNum}[1]{\uppercase\expandafter{\romannumeral #1}}

\newcommand{\MaxSim}{\operatorname{MaxSim}}
\newcommand{\dis}{\operatorname{dis}}

\newcommand{\etal}{\emph{et~al.}\xspace}
\newcommand{\eg}{\emph{e.g.},\xspace}
\newcommand{\ie}{\emph{i.e.},\xspace}

\newcommand\figref[1]{Fig.~\ref{#1}}

\newcommand\tabref[1]{Table~\ref{#1}}

\newcommand\secref[1]{Sec.~\ref{#1}}
\newcommand\equref[1]{Eq.~(\ref{#1})}

\newcommand\algoref[1]{Algo.~\ref{#1}}

\newcommand\theref[1]{Theorem~\ref{#1}}

\newcommand\expref[1]{Example~\ref{#1}}
\newcommand\defref[1]{Definition~\ref{#1}}

\theoremstyle{definition}
\newtheorem{theorem}{Theorem}
\newtheorem{lemma}{Lemma}

\newtheorem{definition}{Definition}
\newtheorem{example}{Example}

\newcommand\vldbdoi{10.14778/3685800.3685895}
\newcommand\vldbpages{4441 - 4444}
\newcommand\vldbvolume{17}
\newcommand\vldbissue{12}
\newcommand\vldbyear{2026}
\newcommand\vldbauthors{Binhan Yang, Yuxiang Zeng, Hengxin Zhang, Zhuanglin Zheng, Yunzhen Chi, Yongxin Tong, Ke Xu}
\newcommand\vldbtitle{\shorttitle} 
\newcommand\vldbavailabilityurl{}
\newcommand\vldbpagestyle{empty} 

\newcommand{\fakeparagraph}[1]{\vspace{1mm}\noindent\textbf{#1.}}

\ifodd 0
\newcommand{\TODO}[1]{\textbf{\color{red}TODO:{ #1} }}

\newcommand{\ybh}[1]{{\color{blue}{#1}}}
\newcommand{\zeng}[1]{{\color{brown}{#1}}}
\else
\newcommand{\TODO}[1]{}

\newcommand{\zeng}[1]{#1}
\newcommand{\ybh}[1]{#1}
\fi

\newcommand{\MethodName}{MV-HNSW}
\newcommand\IndexName{{MV-HNSW}\xspace}

\begin{document}
\title{Unified and Efficient Approach for Multi-Vector Similarity Search}

\settopmatter{authorsperrow=1}

%
%
\author{Binhan Yang, Yuxiang Zeng, Hengxin Zhang, Zhuanglin Zheng, Yunzhen Chi, Yongxin Tong, Ke Xu}
\affiliation{%
  \institution{State Key Laboratory of Complex \& Critical Software Environment, Beihang University, Beijing, China}
    \institution{\{yangbh, yxzeng, zhanghengxin, zzlin, chiyz, yxtong, kexu\}@buaa.edu.cn}
}
\email{}

\begin{abstract}
Multi-Vector Similarity Search is essential for fine-grained semantic retrieval in many real-world applications, offering richer representations than traditional single-vector paradigms.  
Due to the lack of native multi-vector index, existing methods rely on a filter-and-refine framework built upon single-vector indexes.
By treating token vectors within each multi-vector object in isolation and ignoring their correlations, these methods face an inherent dilemma: aggressive filtering sacrifices recall, while conservative filtering incurs prohibitive computational cost during refinement. 
To address this limitation, we propose \IndexName{}, the first native hierarchical graph index designed for multi-vector data. 
\IndexName{} introduces a novel edge-weight function that satisfies essential properties (symmetry, cardinality robustness, and query consistency) for graph-based indexing, an accelerated multi-vector similarity computation algorithm, and an augmented search strategy that dynamically discovers topologically disconnected yet relevant candidates.
Extensive experiments on seven real-world datasets show that \IndexName{} achieves state-of-the-art search performance, maintaining over 90\% recall while reducing search latency by up to 14.0$\times$ compared to existing methods.
\end{abstract}

\maketitle

\pagestyle{\vldbpagestyle}
\begingroup\small\noindent\raggedright\textbf{PVLDB Reference Format:}\\
\vldbauthors. \vldbtitle. PVLDB, \vldbvolume(\vldbissue): \vldbpages, \vldbyear.\\
\href{https://doi.org/\vldbdoi}{doi:\vldbdoi}
\endgroup
\begingroup
\renewcommand\thefootnote{}\footnote{\noindent
This work is licensed under the Creative Commons BY-NC-ND 4.0 International License. Visit \url{https://creativecommons.org/licenses/by-nc-nd/4.0/} to view a copy of this license. For any use beyond those covered by this license, obtain permission by emailing \href{mailto:info@vldb.org}{info@vldb.org}. Copyright is held by the owner/author(s). Publication rights licensed to the VLDB Endowment. \\
\raggedright Proceedings of the VLDB Endowment, Vol. \vldbvolume, No. \vldbissue\ %
ISSN 2150-8097. \\
\href{https://doi.org/\vldbdoi}{doi:\vldbdoi} \\
}\addtocounter{footnote}{-1}\endgroup

\ifdefempty{\vldbavailabilityurl}{}{
\vspace{.3cm}
\begingroup\small\noindent\raggedright\textbf{PVLDB Artifact Availability:}\\
The source code, data, and/or other artifacts have been made available at \url{\vldbavailabilityurl}.
\endgroup
}

\section{INTRODUCTION}	\label{sec:introduction}

With the rapid advancement of Retrieval-Augmented Generation (RAG) \cite{DBLP:journals/corr/abs-2312-10997}, vector similarity search has become essential for enabling Large Language Models (LLMs) to access external knowledge. 
Given a query vector, vector similarity search aims to identify the top-$k$ most similar vectors from a large-scale collection, where similarity is typically measured by distance metrics such as inner product or Euclidean distance. 
This core operation underpins a wide range of applications, such as image retrieval, recommendation systems, and entity linking.

However, modern applications increasingly demand finer-grained semantic understanding. 
Conventional \textit{single-vector representation}, where each object is encoded as a single fixed-dimensional vector, often struggles to capture the rich details of complex unstructured data. 
For example, encoding an entire document as a single vector inevitably compresses local information, leading to semantic dilution \cite{khattab2020colbert, santhanam2022colbertv2}. 
To address this limitation, recent retrieval systems have shifted toward \textit{multi-vector representations}, where each object is decomposed into smaller units (\eg tokens or passages), each represented as a distinct vector (referred to as a ``\textit{\zeng{token} vector}'' throughout this paper). This paradigm, known as \textbf{M}ulti-\textbf{V}ector \textbf{S}imilarity \textbf{S}earch (\textbf{MVSS}) \cite{lee2023rethinking, khattab2020colbert}, preserves fine-grained semantics and enables more precise relevance matching through functions like $\operatorname{MaxSim}$ \cite{khattab2020colbert}. A typical example is as follows.

\begin{example}[Document Retrieval]
Consider an academic document retrieval system. 
A user submits a query: ``climate change impacts on biodiversity''. 
In this system, both queries and documents are represented as multi-vectors: the query is encoded into token-level vectors, while each document (\eg a research paper) is chunked into passages (\eg abstract, methodology, experiments), each encoded as a separate vector.
To measure relevance, the $\operatorname{MaxSim}$ function \cite{khattab2020colbert} is commonly used. 
For a given document, we examine each query token (\eg ``climate change'', ``biodiversity'') and identify its most similar passage within that document based on semantic relevance. 
These maximum similarities are then summed across all query tokens. 
A document is considered relevant if its passages collectively cover the query's semantics, with each query token finding a strong semantic match somewhere in the document.
\end{example}

\fakeparagraph{Prior Work on Multi-Vector Similarity Search}
Most existing approaches follow a filter-and-refine paradigm. 
In the filtering phase, heuristic strategies prune irrelevant objects based on similarities among \zeng{token} vectors. 
In the refinement phase, exact multi-vector similarity is computed on the remaining candidates.
For filtering, existing methods \cite{khattab2020colbert, lee2023rethinking, gao2021coil, li2023citadel} leverage single-vector indexes (\eg HNSW \cite{malkov2018efficient} and IVF \cite{sivic2003video}) or product quantization \cite{jegou2010product} to prune dissimilar objects. 
During refinement, they enumerate all \zeng{token} vector pairs between each candidate and the query to compute multi-vector similarity according to functions like $\operatorname{MaxSim}$.

\fakeparagraph{Limitations of Existing Work}
Current solutions suffer from critical limitations in either efficiency or recall. 
\textit{First}, their filtering strategies treat the \zeng{token} vectors within a multi-vector in isolation, building separate single-vector indexes. 
This ignores intra-object correlations and leads to avoidable recall loss.
\zeng{\textit{Second}, refinement phases typically compute expensive multi-vector similarity functions in a brute-force manner, creating an inherent dilemma: aggressive filtering to reduce computational cost risks severe recall degradation, while conservative filtering to preserve recall incurs prohibitive overhead during refinement.
These two fundamental limitations arise from a single structural gap: \textbf{the absence of native multi-vector indexing support} in existing vector databases and search engines, such as \ybh{Faiss \cite{douze2025faiss}, Milvus \cite{wang2021milvus} and Weaviate \cite{weaviate2025}}.
This situation forces most methods to adapt single-vector indexes for multi-vector data.
Thus, balancing recall and efficiency remains an open challenge for multi-vector similarity search.}

\fakeparagraph{Technical Challenges and Our Solution} 
Inspired by the success of the HNSW index \cite{malkov2018efficient} in single-vector similarity search, this paper introduces \IndexName{}, a native hierarchical graph index for multi-vector data. 
Adapting graph-based indexing to the multi-vector setting presents three technical challenges:
\begin{enumerate}
    \item \textbf{Unsuitable Multi-Vector Similarity Metric}: Multi-vector similarity functions like $\operatorname{MaxSim}$ lack essential properties required for graph edge weights, such as symmetry.
    
    \item \textbf{Computational Inefficiency}: Expensive multi-vector similarity computations hinder both index construction and similarity search.
    
    \item \textbf{Topological Disconnect}: Due to multi-vector structural complexity, true nearest neighbors may be topologically distant or disconnected from traversed nodes within the graph index.
\end{enumerate}




To address these challenges, \IndexName{} proposes an end-to-end framework integrating three synergistic components.
\textit{First}, we introduce a novel edge-weight function that averages the bidirectional similarity between two multi-vector objects and prove that it satisfies three essential properties (\ie symmetry, cardinality robustness, and query consistency) required for graph-based vector indexes.
\textit{Second}, we design an accelerated algorithm that leverages clustering to efficiently approximate common multi-vector similarity functions, significantly reducing computational overhead.
\textit{Third}, we develop an augmented search strategy that adaptively expands the candidate list during graph traversal, boosting recall while incurring minimal efficiency loss.

\fakeparagraph{Contribution} This paper makes the following contributions:
\begin{itemize}
    \item \textbf{Unified Problem Formalization}: We formalize the \textbf{U}nified \textbf{M}ulti-\textbf{V}ector \textbf{S}imilarity \textbf{S}earch problem, which unifies representative multi-vector similarity metrics (\eg MaxSim \cite{khattab2020colbert} and Aggregate $\gamma \operatorname{NN}$ \cite{lee2023rethinking}) as its special instantiations.
    
    \item \textbf{Essential Property Definition}: We formally define three core properties, \textit{symmetry}, \textit{cardinality robustness}, and \textit{query consistency}, that a valid edge-weight function must satisfy for multi-vector graph indexes. We show that directly adopting existing multi-vector similarity functions (\eg MaxSim) as edge weights in HNSW violates all three.

    \item \textbf{Native Multi-Vector Graph Index}: We propose \IndexName{}, the first hierarchical graph index natively designed for multi-vector data. We design a novel edge-weight function that satisfies all three essential properties.
    
    \item \textbf{Efficiency \& Recall Optimizations}: We present optimization methods to accelerate index construction and similarity search while improving recall.

    \item \textbf{Experiment Evaluation}: We conduct extensive experiments on 7 real-world datasets. Results demonstrate that our \IndexName{} significantly outperforms the state-of-the-art solutions \cite{khattab2020colbert, santhanam2022colbertv2, lee2023rethinking, scheerer2025warp} in search latency and recall.
\end{itemize}

\fakeparagraph{Road Map} 
The rest of this paper is organized as follows. \secref{sec:preliminaries} formalizes the problem. \secref{sec:3} and \secref{sec:SearchAugumentation} introduce our \IndexName{} index and search algorithm. 
\secref{sec:experimentalstudy}, \secref{sec:related-work}, and \secref{sec:CONCLUSION} present the experimental study, related work, and conclusion, respectively.

\newcommand{\ans}{\mathcal{A}}
\newcommand{\dist}{\operatorname{dis}}
\newcommand{\knn}{\gamma\operatorname{NN}}
\newcommand{\nn}{\operatorname{NN}}
\newcommand{\Data}{\mathcal{D}}
\newcommand{\CiteMaxSim}{khattab2020colbert, santhanam2022colbertv2, santhanam2022plaid, louis2025colbert, park2025scv, faysse2025colpali}
\newcommand{\CiteXTR}{lee2023rethinking,scheerer2025warp}

\begin{table}[h]
\centering
\captionsetup{skip=2.4pt}
\caption{Summary of major notations}\label{tab:notations} 
\begin{tabular}{cc}
\hline
Notation & Description \\
\hline
$v=(w_1, \cdots, w_d)$ & A $d$-dimensional vector data $v$ \\
$V=(v_1,\cdots,v_c)$ & A multi-vector $V$ with $c$ \zeng{token} vectors\\
$\mathcal D = \{V_1, \cdots, V_n\}$ & A dataset $\mathcal D$ of $n$ multi-vector data\\
\ybh{$\mathcal {V_{D}}$} & \zeng{A set of all token vectors in the dataset $\mathcal D$} \\
$Q$ & The query multi-vector data\\
$k$ & Number of neighbors needed for the query \\
$\knn(q, V)$ & $\gamma$ nearest neighbors of $q$ in $V$ \\
$\operatorname{\dist}(v_1, v_2)$ & Vector data distance between $v_1$ and $v_2$ \\
$\operatorname{\USim}(Q, V)$ & Unified multi-vector similarity from $Q$ to $V$ \\
\hline
\end{tabular}
\vspace{-3ex}
\end{table}

\section{\MakeUppercase{Problem Statement}}	\label{sec:preliminaries}

This section formally defines the studied problem.
The key notations used throughout the paper are summarized in \tabref{tab:notations}.

\subsection{Basic Concepts}

\begin{definition}[Vector Data]
A vector data object (``vector'' as short) is represented as a point $v = (w_1, w_2, \cdots, w_d)$ in the $d$-dimensional real coordinate space $\mathbb{R}^d$, where each $w_i \in \mathbb{R}$ denotes the $i$-th coordinate of $v$.
Given two such objects, their similarity can be quantified by a distance function $\dist: \mathbb{R}^d \times \mathbb{R}^d \to \mathbb{R}$.
\end{definition}

Common choices for the distance function $\dist(\cdot,\cdot)$ include Euclidean distance, inner product, and cosine similarity. In this work, we adopt the inner product $\langle \cdot, \cdot \rangle$ as the default distance function, following the convention that a larger distance indicates greater similarity. Our solution also supports other distance metrics.

\begin{definition}[Single-Vector Similarity Search] 
Given a (single) vector dataset $V$, a query vector $q \in \mathbb{R}^d$, and a positive integer $k$, the single-vector similarity search returns a subset $\ans$ of exactly $k$ vectors from the dataset $V$ that are nearest to $q$, \ie satisfying
\begin{equation}
    \forall u \in \ans, \forall v \in (V \setminus \ans), \dist(u, q) \ge \dist(v, q)
\end{equation}
We use $\knn(q,V)$ to denote these $\gamma$-nearest neighbors of $q$ in $V$.
\end{definition}

\begin{definition}[Multi-vector Data]
A multi-vector data object is denoted as a finite sequence of vectors, formally $V = (v_1, v_2, \cdots, v_{c})$ with $c > 1$, and each vector $v_i$ shares the same dimension $d$.
\zeng{We refer to each $v_i$ as a \textbf{token vector}, adapting the terminology of \textbf{token embeddings} from LLMs.}
\end{definition}

A multi-vector data provides a more fine-grained representation of unstructured data.
In practice, such a representation is typically created in two steps: (1) applying a \textit{segmentation strategy} to divide the raw data into multiple segments, and then (2) encoding each segment independently using an embedding model.
The cardinality $c$ corresponds to the number of segments, which in turn depends on the granularity of the segmentation strategy.

To illustrate this process, we use text data as an example and introduce two  segmentation strategies used in passage retrieval: 
\begin{itemize}
\item \textbf{Token-Level Segmentation}: Methods like \ybh{Col}BERT~\cite{khattab2020colbert} generate a contextualized embedding for each token in the text, so each vector encodes token-level semantics within its surrounding context. 

\item \textbf{Passage-Level Segmentation}: Methods like DPR~\cite{karpukhin2020dense} first split a document into fixed-size text windows (referred to as ``passages''), then embed each passage into a single vector.
\end{itemize}

\begin{definition}[Unified Multi-Vector Similarity Function]
For multi-vector data \ybh{$Q$} and \ybh{$V$}, the unified multi-vector similarity function $\operatorname{\USim}(Q, V)$ is defined as the \zeng{weighted} average distance from each token vector $q \in Q$ to its \ybh{$\gamma$}-nearest neighbors in $V$:
\begin{equation}\label{eq:USim}
    \operatorname{\USim}(Q, V) = \sum_{q \in Q} \left( w_q \cdot \frac{1}{\gamma} \sum_{v \in \knn(q, V)}{\dist(q, v)} \right) 
\end{equation}
\noindent\zeng{where $w_q \in [0, 1]$ is the input weight assigned to token vector $q$.}
\end{definition}

Compared to single-vector similarity evaluation, the unified function $\operatorname{\USim}$ for multi-vectors is more complex and computationally intensive.
Moreover, several multi-vector similarity metrics proposed in prior work can be expressed as special cases of $\operatorname{\USim}$. Below we introduce \zeng{three} representative examples:
\begin{itemize}
\item \textbf{MaxSim} is the most prevalent multi-vector similarity metric. It corresponds to the case where $\gamma = w_q = 1$ in \equref{eq:USim}. 
It was first introduced in ColBERT~\cite{khattab2020colbert} and has since been used in many subsequent studies~\cite{santhanam2022colbertv2, santhanam2022plaid, li2023slim, shrestha2024espn, nardini2024efficient, louis2025colbert, park2025scv, faysse2025colpali}.

\item \zeng{\textbf{Weighted Chamfer} was proposed to explicitly account for importance of different token vectors in multi-vector similarity search~\cite{garg2025incorporating}. 
It corresponds to the instantiation of \equref{eq:USim} with $\gamma=1$, where $w_q$ are learnable weights.}

\item \textbf{Aggregate $\knn$} was originally introduced in Google's framework (called XTR~\cite{lee2023rethinking}) for multi-vector similarity search.
It corresponds to the instantiation of \equref{eq:USim} with \ybh{$\gamma>1$, $w_q = 1$}, and $\dist$ defined as the inner product.
This metric has also been adopted in follow-up work \cite{scheerer2025warp}.
\end{itemize}

\vspace{-1ex}
\begin{example} \label{exp:maxsim}
To illustrate the computation of $\operatorname{\USim}$ with $\gamma = 1$, consider a multi-vector $V = \{v_1, v_2, v_3\}$ and a query $Q = \{q_1, q_2, q_3\}$. 
Their token vectors and associated weights are given as follows:
\begin{itemize}[noitemsep]
    \item $Q$: $q_1=(1,0)$, $q_2=(0,1)$, and $q_3=(\frac{1}{\sqrt 2}, \frac{1}{\sqrt 2})$.
    \item $V$: $v_1=(0.8, 0.6)$, $v_2=(0.6, 0.8)$, and $v_3=(\frac{1}{\sqrt 2}, \frac{1}{\sqrt 2})$.
    \item Weight: $w_{q_1} = 1, w_{q_2} = 0$ and $w_{q_3} = 1$.
\end{itemize}
Based on \equref{eq:USim} the similarity from $Q$ to $V$ is computed by taking, for each token vector $q \in Q$, the maximum inner product with any token vector in $V$, and then aggregating the weighted results:
\begin{equation*}
\begin{aligned}
\mathrm{USim}(Q, V) &= \sum_{q_i \in Q} w_{q_i} \cdot \max_{v_j \in V} \; \langle q_i, v_j \rangle \\
&= 1 \cdot \max\Bigl\{ 0.8,\; 0.6,\; \tfrac{1}{\sqrt{2}} \Bigr\}
   + 0 \cdot \max\Bigl\{ 0.6,\; 0.8,\; \tfrac{1}{\sqrt{2}} \Bigr\} + \\
&\quad 1 \cdot \max\Bigl\{ \tfrac{0.8+0.6}{\sqrt{2}},\; 
                    \tfrac{0.6+0.8}{\sqrt{2}},\; 
                    1 \Bigr\} = 0.8+0+1=1.8
\end{aligned}
\end{equation*}
\end{example}

\vspace{-1ex}
\subsection{Problem Definition}
We formally define the studied problem as follows.

\vspace{-1ex}
\begin{definition}[\textbf{U}nified \textbf{M}ulti-\textbf{V}ector \textbf{S}imilarity \textbf{S}earch]
Given a multi-vector dataset $\Data$, a query multi-vector $Q$ \zeng{with associated token weights $w_q$}, and a target integer $k$, the unified multi-vector similarity search aims to retrieve a subset $\mathcal{A} \subseteq \Data$ \zeng{consisting of the $k$ multi-vectors most similar to $Q$ according to the unified multi-vector similarity function $\operatorname{\USim}$}.
Formally, the search result $\mathcal{A}$ must satisfy the following conditions:
\begin{enumerate}
\item $|\mathcal{A}| = k$ and $\mathcal{A} \subseteq \Data$;
\item $\forall U \in \mathcal{A}$ and $V \in (\Data \setminus \mathcal{A}),  \operatorname{\USim}(Q, U) \geq \operatorname{\USim}(Q, V)$.
\end{enumerate}
\vspace{-1ex}
\end{definition}

\expref{exp:UMSSP} illustrates a concrete instance of the problem.

\vspace{-2ex}
\begin{example}\label{exp:UMSSP}
Consider a multi-vector dataset $\mathcal{D} = \{V_1, V_2, V_3\}$ and a query multi-vector $Q = \{q_1, q_2\}$. Their token vectors are as follows:
\begin{itemize}[noitemsep]
    \item $V_1 = \{v_1, v_2\}$: $v_1 = (\frac{\sqrt{3}}{2}, \frac{1}{2}, 0)$, $v_2 = (0, \frac{4}{5}, \frac{3}{5})$.
    \item $V_2 = \{v_3, v_4\}$: $v_3 = (\frac{1}{\sqrt{2}}, \frac{1}{\sqrt{2}}, 0)$, $v_4 = (0, \frac{3}{5}, \frac{4}{5})$.
    \item $V_3 = \{v_5, v_6\}$: $v_5 = (\frac{3}{5}, \frac{4}{5}, 0)$, $v_6 = (0, 1, 0)$.
    \item $Q = \{q_1, q_2\}$: $q_1 = (1, 0, 0)$, $q_2 = (0, \frac{1}{\sqrt{2}}, \frac{1}{\sqrt{2}})$.
\end{itemize}
Assume $\gamma = 1$, $w_{q_1} = w_{q_2} = 1$, and $k = 2$. 
Computing $\operatorname{\USim}$ from the query multi-vector $Q$ to each multi-vector $V_i \in \mathcal{D}$ yields:
\begin{equation*}
\begin{aligned}
\operatorname{USim}(Q, V_1) 
&= \langle q_1, v_1 \rangle + \langle q_2, v_2 \rangle 
= \tfrac{\sqrt{3}}{2} + \tfrac{7\sqrt{2}}{10} \approx 1.856, \\
\operatorname{USim}(Q, V_2) 
&= \langle q_1, v_3 \rangle + \langle q_2, v_4 \rangle 
= \tfrac{1}{\sqrt{2}} + \tfrac{7\sqrt{2}}{10} \approx 1.697, \\
\operatorname{USim}(Q, V_3) 
&= \langle q_1, v_5 \rangle + \langle q_2, v_6 \rangle 
= \tfrac{3}{5} + \tfrac{1}{\sqrt{2}} \approx 1.307.
\end{aligned}
\end{equation*}

With $k=2$, the unified multi-vector similarity search returns $\{V_1, V_2\}$ as the top-2 results with the highest similarity scores.
\end{example}
\vspace{-1ex}

\fakeparagraph{Remark}
Due to the curse of dimensionality \ybh{\cite{indyk1998approximate}} and the expensive computational cost of evaluating multi-vector similarity, exact solutions to the above problem do not scale to large-scale datasets.
Consequently, existing research has focused on approximate algorithms that trade some accuracy for efficiency, aiming to \textit{maximize the recall} as defined in \equref{equ:recall}:
\begin{equation} \label{equ:recall}
    \text{Recall} = \frac{|\mathcal{A} \cap \mathcal{A}^*|}{k}
\end{equation}
where $\mathcal{A}^*$ denotes to the exact answer.
Following this research direction, our work also focuses on the approximate algorithm for \zeng{unified} multi-vector similarity search.




\SetKwFunction{SearchLayer}{SEARCH-LAYER}

\vspace{-1ex}
\section{\MakeUppercase{Multi-Vector Index} \MakeUppercase{\IndexName}}\label{sec:3}

\zeng{This section elaborates on our index {\IndexName}, the first native hierarchical graph index for multi-vectors. We start with its main idea (\secref{sec:overview}), then introduce its structure and novel edge-weight function (\secref{sec:index-structure}), present the index construction algorithm (\secref{sec:index-construct}), and finally propose an optimized {\USim} computation method for accelerating offline construction and online search (\secref{subsec:EUC}).}

\vspace{-1ex}
\subsection{Main Idea}\label{sec:overview}

The Hierarchical Navigable Small World (HNSW) graph \cite{malkov2018efficient} has become a dominant index for single-vector similarity search. Its success stems from a multi-layered graph design that enables approximate $k$NN search in sublinear time, achieving state-of-the-art recall and latency trade-offs \cite{wang2021comprehensive}, making it one of the most popular index choices in modern vector databases. However, HNSW is designed to index individual token vectors, not multi-vectors.

Consequently, existing filter-and-refine based methods for multi-vector similarity search lack native multi-vector indexing support, suffering from either low recall or high search latency. To address these limitations, we propose a multi-vector index ({\IndexName}) and an efficient solution for unified multi-vector similarity search. Our framework consists of two synergistic components:

\textbf{(1) Multi-Vector Graph Index}.
We design a hierarchical graph index, where each node represents a multi-vector. An edge connects two nodes if their corresponding multi-vectors are similar enough according to a similarity fusion function.
To accelerate index construction, we introduce an approximate method for efficiently computing the {\USim} score.

\textbf{(2) Augmented Search via Hierarchical Graphs}.
Using {\IndexName}, we can progressively find the $k \operatorname{NN}$s to the query multi-vector, but navigating this graph alone still misses relevant answers.
To mitigate this, we expand the candidate set for every explored node during traversal by incorporating globally relevant multi-vectors into its connected neighbors (see \secref{sec:SearchAugumentation} for detail).





\subsection{Multi-Vector Index Structure}\label{sec:index-structure}

We introduce our \IndexName index in two parts: the core design principle (\secref{sec:index-structure-idea}) followed by its detailed structure (\secref{sec:index-structure-detail}).

\subsubsection{Core Design Principle.}\label{sec:index-structure-idea}
To extend HNSW's success to multi-vector similarity search, we propose \IndexName{}, a novel HNSW-based index. 
In this index, each node represents a multi-vector, and edges connect similar multi-vectors. 
The \textbf{core technical challenge} lies in defining a suitable edge-weight function $f(u, v)$ between nodes $u$ and $v$. 
We first identify and formalize three essential properties that such a function $f(\cdot,\cdot)$ must possess: \textit{symmetry}, \textit{cardinality robustness}, and \textit{query consistency}.
We then demonstrate that the unified multi-vector similarity function $\operatorname{\USim}$ violates these criteria, justifying the need for our subsequent solution in \secref{sec:index-structure-detail}.

\fakeparagraph{Essential Property for Edge Weight Function}
We now formally define the three essential properties for the edge weight function.

\begin{definition}[Symmetry]
    An edge weight function $f$ is symmetric if $f(u, v) = f(v, u)$ for any two multi-vectors $u$ and $v$.
\end{definition}

This weight symmetry property ensures that traversing graph is consistent: if you can go from nodes from $u$ to $v$, then $v$ is equally reachable from $u$. This ensures consistent navigation.

\begin{definition}[Cardinality Robustness]
    An edge weight function $f$ is cardinality-robust if it does not exhibit systematic bias towards larger cardinalities of multi-vectors.
    That is, $f(u, v)$ should not systematically increase or decrease with $|u|$ or $|v|$.
\end{definition}

Without this property, some multi-vectors would attract an excessive number of edges due to their high cardinality rather than genuine similarity.
This isolates closely relevant nodes and creates hub nodes that monopolize connections, thereby harming recall.

\begin{definition}[Query Consistency]
    An edge weight function $f$ is said to be query-consistent if, whenever two nodes are similar under $f$, their $\operatorname{\USim}$ scores remain similar for any query.
     Formally, for any small $\tau \geq 0$, if \ybh{$f(u,v) \geq 1 - \tau$}, then 
     \begin{equation}
        \left| \operatorname{\USim}(Q, u) - \operatorname{\USim}(Q, v) \right| = O(\tau \cdot |Q|).
     \end{equation}
\end{definition}

This property guarantees that when the search moves from a node $u$ to its connected neighbor $v$ with high weights, their similarity to the query remains comparable.
Without this property, following a local edge could lead to a node with a drastically different {\USim} score, making the traversal unstable and inefficient.

\fakeparagraph{Why {\USim} Fails as an Edge Weight Function}
We now demonstrate that the unified multi-vector similarity function {\USim}, while a natural candidate for $f$, fails to satisfy the essential properties defined above.

\begin{lemma}
    {\USim} is asymmetric, not cardinality-robust, and not query-consistent.
\end{lemma}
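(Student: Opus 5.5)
All three claims are negative, so the plan is to give one small explicit counterexample for each property, reusing the setting of \expref{exp:maxsim}. We work with the MaxSim instantiation of Eq.~(\ref{eq:USim}): $\gamma=1$, all weights $w_q=1$, inner product as $\dist$, and unit-norm token vectors in $\mathbb{R}^2$. A counterexample for this special case is enough, since the lemma asserts that the unified function does not have the properties in general.

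\textbf{Asymmetry.} Take $U=(u_1,u_2)$ with $u_1=(1,0)$ and $u_2=(0,1)$, and take $V=(v_1,v_2)$ with $v_1=v_2=(1,0)$. Then
$\USim(U,V)=\max\{1,1\}+\max\{0,0\}=1$ and $\USim(V,U)=1+1=2$.
Hence $\USim(U,V)\neq\USim(V,U)$. If one insists that the tokens of a multi-vector be distinct, perturb $v_2$ slightly to $v_2=(\cos\epsilon,\sin\epsilon)$. The gap stays close to $1$.

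\textbf{Cardinality robustness.} The sum over $q\in U$ is unnormalized, so $\USim(U,V)$ grows linearly with $|U|$. For instance, if $U$ consists of $c$ copies (or near copies) of $(1,0)$ and $V=\{(1,0)\}$, then $\USim(U,V)\approx c$, which increases without bound in $|U|$. The second argument also biases the score: the inner maximum is monotone under enlarging $V$. If $V\subseteq V'$, then $\USim(Q,V')\ge\USim(Q,V)$ for every $Q$. So a multi-vector with many tokens that spread over the sphere scores high against every node and becomes a hub. Either observation shows a systematic increase with cardinality, which is exactly what the definition of cardinality robustness forbids.

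\textbf{Query consistency.} This is the step that needs care, because the definition presupposes edge weights normalized so that $f(u,v)\ge 1-\tau$ is meaningful. We take $f=\USim$, normalized by $|u|$ so that its values lie in $[-1,1]$ for unit vectors. Choose $u=\{(1,0)\}$ and $v=\{(1,0),(-1,0)\}$. Then $f(u,v)=1$, so $\tau=0$. Query consistency would therefore force $|\USim(Q,u)-\USim(Q,v)|=0$ for every $Q$. Now take $Q=\{(-1,0)\}$. We get $\USim(Q,u)=-1$ and $\USim(Q,v)=1$, so the difference is $2>0$. Taking $Q$ to be $m$ copies (or near copies) of $(-1,0)$ makes the difference $2m=\Theta(|Q|)$, which is not $O(\tau|Q|)$ at $\tau=0$. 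The root cause is the one-directional containment: $u$ is covered by $v$, but $v$ has extra tokens that $u$ lacks. The main obstacle is making the normalization of $f$ in this argument explicit and consistent with the definition. I would state it directly in the proof and note that the same counterexample works whichever argument order is used for $f$. With the order $f(v,u)$, one uses $u=\{(1,0),(-1,0)\}$ and $v=\{(1,0)\}$ instead.
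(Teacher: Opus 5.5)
Your proposal is correct and follows essentially the same route as the paper. The paper argues asymmetry only verbally, and your explicit $U,V$ pair makes that point rigorous; it uses the same $\Theta(|u|)$ growth for cardinality; and it refutes query consistency with the same one-sided-containment counterexample normalized by the first argument's cardinality, where orthogonal extra tokens give a gap of $|Q|$ and your antipodal ones give $2|Q|$. One cosmetic fix: the paper's definition requires every multi-vector to have $c>1$ tokens, so replace your singletons $u=\{(1,0)\}$, $V=\{(1,0)\}$ and $Q=\{(-1,0)\}$ by two copies of the same token, which leaves $f(u,v)=1$ and the gap $2|Q|$ unchanged.
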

\begin{proof}

    \textbf{(1) Violation of Symmetry.}
    The definition in \equref{eq:USim} is inherently asymmetric. For $\gamma = 1$, $\operatorname{USim}(u, v)$ computes the weighted sum of distance from each \zeng{token vector} $u_i \in u$ to its nearest neighbor in $v$, whereas $\operatorname{USim}(v, u)$ computes that from each $v_j \in v$ to its nearest neighbor in $u$. Since the nearest-neighbor relation is not symmetric, {\USim} is asymmetric.
    
    \textbf{(2) Violation of Cardinality Robustness.}
    From \equref{eq:USim}, $\operatorname{\USim}(u, v)$ sums over all $u_i \in u$. Thus, for fixed $v$ and $\gamma = 1$, $\operatorname{\USim(u, v)} = \Theta(|u|)$ when the nearest neighbor distances are bounded. This linear dependence on $|u|$ directly violates cardinality robustness.

    \textbf{(3) Violation of Query Consistency.}
    \zeng{Query consistency does not hold in general. Consider the following counterexample.
    Let $D_1 = (v_1, v_2)$ and $D_2 = (v_1, v_2, v_3, v_4)$ be multi-vector datasets, and let $Q = (v_3, v_4)$ be a query multi-vector with weight $w_{v_3} = w_{v_4} = 1$, where each $v_i$ denotes a token vector.
    We assume both token vectors $v_3,v_4$ are orthogonal to every token vector in $D_1$, which can be trivially instantiated in high-dimensional space.
    Then, when $\gamma = 1$, we have $\operatorname{USim}(Q, D_1) = 0$ and $\operatorname{USim}(Q, D_2) = 2$.
    Then $\frac{\operatorname{USim}(D_1, D_2)}{|D_1|} = 1$, giving $\tau = 0$, but $|\operatorname{USim}(Q, D_1) - \operatorname{USim}(Q, D_2)| = |Q| \neq 0$. Hence, the query consistency is violated.
    }
\end{proof}

\subsubsection{Index Structure.}\label{sec:index-structure-detail}

\begin{figure*}[t]
    \centering
    \includegraphics[width=\textwidth]{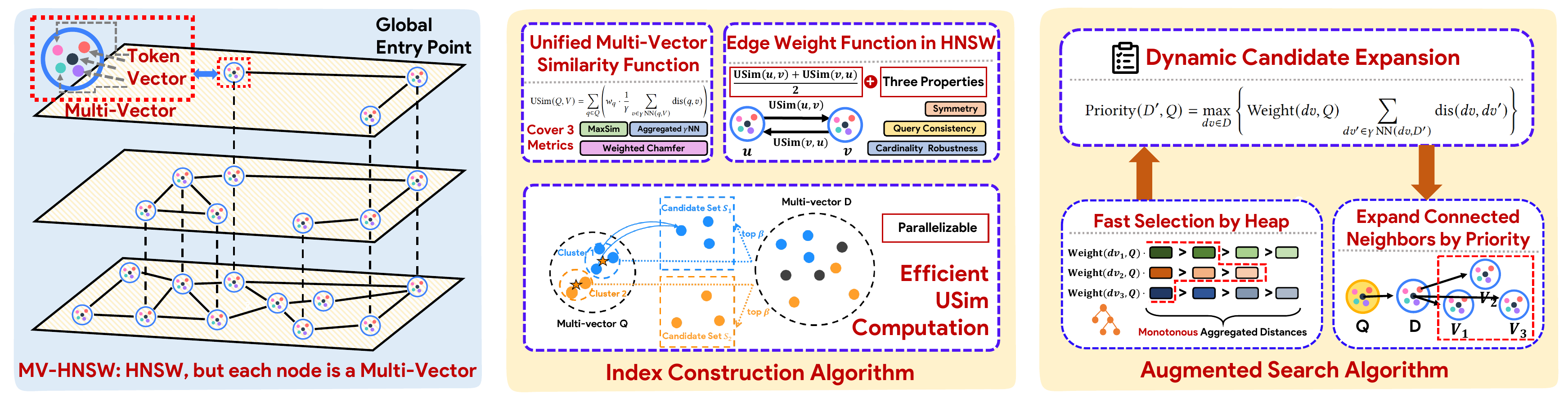}
    \vspace{-4ex}
    \caption{Overview of our solution {\IndexName}}\label{fig:dco-idea}
    \label{fig:framework}
    \vspace{-3ex}
\end{figure*}

As shown in \ybh{\figref{fig:framework}}, \IndexName is a hierarchical graph with multiple layers, where each node represents a unique multi-vector,
Within a layer, nodes are connected by undirected edges that indicate the similarity between the corresponding multi-vectors, and each node maintains at most $M$ such connections to its (approximate) nearest neighbors.

When constructing this index, each multi-vector $D_i$ from the dataset $\Data$ is first assigned a random layer $l_i$ sampled from an exponentially decaying distribution. 
\zeng{A node representing $D_i$ is then inserted into all layers from $l_i$ down to the base layer.
Accordingly, the highest layer contains the fewest nodes and provides the global entry point $ep$ for similarity search.}
This design ensures two key properties: (1) higher layers contain progressively sparser subsets of the dataset, which enables rapid coarse-grained navigation, and (2) similarity search can proceed top-down, starting from the coarsest layer and refining results through finer layers.

\fakeparagraph{Edge Weight Function}
The key adaptation that enables effective multi-vector search lies in the design of edge weight function $f$.
As aforementioned, a suitable $f$ must satisfy three properties: \textit{symmetry}, \textit{cardinality robustness}, and \textit{query consistency}.
To meet these requirements, we introduce a novel similarity fusion function (defined in \defref{def:edge}) to quantify edge weights between multi-vectors, and provide a formal proof in \theref{the:edge}.

\begin{definition}[Edge Weight Function]\label{def:edge}
    The edge weight function $f(u,v)$ is defined as the average of bidirectional normalized similarity between two multi-vectors $u$ and $v$:
    \begin{equation} \label{eq:mvsim}
        f(u,v) = \frac{1}{2}\left(\frac{\operatorname{\USim}(u, v)}{|u|} + \frac{\operatorname{\USim}(v, u)}{|v|}\right)
    \end{equation}
    \noindent\zeng{where token weights used in $\operatorname{\USim}$ follow the same pre-processing method as query \zeng{token} weights, defaulting to $1$ if unspecified.}
\end{definition}

\begin{theorem}\label{the:edge}
    The edge weight function $f$ defined in \equref{eq:mvsim} is symmetric, cardinality-robust, and query-consistent.
\end{theorem}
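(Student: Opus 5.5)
We prove the three properties separately, in increasing order of difficulty. Throughout we use the standing conventions of the paper: token vectors are unit-normalized, so for the inner product $\dist(q,v)\in[-1,1]$; token weights lie in $[0,1]$; and by \defref{def:edge} the stored weights default to $1$.

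\textbf{Symmetry.} This is immediate. Swapping $u$ and $v$ in \equref{eq:mvsim} only permutes the two summands $\operatorname{\USim}(u,v)/|u|$ and $\operatorname{\USim}(v,u)/|v|$, so $f(u,v)=f(v,u)$.

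\textbf{Cardinality robustness.} We write each normalized term as an average:
\[
\frac{\operatorname{\USim}(u,v)}{|u|}=\frac{1}{|u|}\sum_{u_i\in u} w_{u_i}\,s(u_i,v), \qquad s(u_i,v)=\frac{1}{\gamma}\sum_{x\in\knn(u_i,v)}\dist(u_i,x).
\]
Each $s(u_i,v)$ is an average of $\gamma$ distances, so it lies in $[-1,1]$. Hence each normalized term is a convex-type average bounded in $[-1,1]$ independently of $|u|$ and $|v|$, and so $f(u,v)\in[-1,1]$. We contrast this with the $\Theta(|u|)$ growth shown in the preceding Lemma. In particular, duplicating $u$ (replicating every token $m$ times) leaves the average unchanged when $\gamma=1$. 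This exhibits the absence of systematic bias in $|u|$, and symmetrically in $|v|$.

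\textbf{Query consistency.} This is the main step. Assume $f(u,v)\ge 1-\tau$. Both normalized terms are at most $1$ and their average is at least $1-\tau$, so each term is at least $1-2\tau$. Each term averages quantities $w\,s\le 1$, so a Markov-type argument says that almost every token of $u$ has a near neighbor in $v$, and conversely. Concretely, for weights equal to $1$ and $\gamma=1$: for every $u_i$ there is $v_{j}$ with $\langle u_i,v_j\rangle\ge 1-\epsilon_i$, where $\sum_i\epsilon_i\le 2\tau|u|$. For unit vectors this gives $\|u_i-v_j\|^2=2\epsilon_i$. The inner product with a unit query token is $1$-Lipschitz, so for any $q$,
\[
\max_{v_j}\langle q,v_j\rangle \ge \max_{u_i}\langle q,u_i\rangle-\sqrt{2\epsilon_{i^*}}.
\]
The symmetric inequality holds with the roles of $u$ and $v$ exchanged. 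Summing over $q\in Q$ with $w_q\le1$ then gives
\[
|\operatorname{\USim}(Q,u)-\operatorname{\USim}(Q,v)|\le |Q|\max_i\sqrt{2\epsilon_i}.
\]

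\textbf{Obstacle.} This last bound is the weak point in two ways. First, it yields $O(\sqrt{\tau})$ rather than $O(\tau)$. Second, it needs a per-token bound $\max_i\epsilon_i$, whereas the hypothesis only controls the average $\sum_i\epsilon_i$. My first attempt would be to use the averaged bound and match the statement's $O(\tau\cdot|Q|)$ in the small-$\tau$ regime, where $\epsilon_i\le 2\tau|u|$ and the inner-product deviation is controlled linearly once we bound $\langle q,u_i-v_j\rangle$ by $1-\langle u_i,v_j\rangle$ up to constants. That linear control holds if query tokens are themselves close to data tokens; otherwise we accept a $\sqrt{\tau}$ rate and note that it still vanishes as $\tau\to0$, in contrast to the counterexample in the preceding Lemma. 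For $\gamma>1$ I would apply the same matching argument to each of the $\gamma$ neighbors, using a bijective matching between the $\gamma$-neighborhoods induced by the near-duplicate pairs.
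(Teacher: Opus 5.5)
Your arguments for symmetry and cardinality robustness are the paper's (Lemmas~\ref{lem:symmetry} and~\ref{lem:cardinality}) and are fine. The gap is in query consistency, and it cannot be closed the way you hope.

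The step you propose for recovering a linear rate, bounding $\langle q,u_i-v_j\rangle$ by a constant multiple of $1-\langle u_i,v_j\rangle$, is false for a general query token. Cauchy--Schwarz is attained at $q=(u_i-v_j)/\|u_i-v_j\|$, where the gap equals $\|u_i-v_j\|=\sqrt{2\epsilon_i}$. The linear regime holds only when $\|q-u_i\|=O(\sqrt{\epsilon_i})$, and the definition, which quantifies over every $Q$, does not grant that.

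In fact the $O(\tau\cdot|Q|)$ requirement fails for $f$ outright. Take orthonormal $e_1,e_2,e_3$ and set $a=e_1$, $a'=(1-\epsilon)e_1+\sqrt{2\epsilon-\epsilon^2}\,e_2$, $b=e_3$, $u=(a,b)$, $v=(a',b)$, with $\gamma=1$ and unit weights. Then $f(u,v)=1-\epsilon/2$, so $\tau=\epsilon/2$. The token $q=(a-a')/\sqrt{2\epsilon}$ satisfies $\langle q,a\rangle=\sqrt{\epsilon/2}$, $\langle q,a'\rangle=-\sqrt{\epsilon/2}$ and $\langle q,b\rangle=0$. Hence $Q=(q,q)$ gives $|\operatorname{USim}(Q,u)-\operatorname{USim}(Q,v)|=|Q|\sqrt{\tau}$, whose ratio to $\tau|Q|$ is unbounded as $\tau\to0$. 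Replacing $b$ by $c-1$ shared orthonormal tokens gives a gap of $\sqrt{c\tau/2}$ per query token, so a dependence on cardinality is also unavoidable. The third claim of the theorem, read literally, is therefore false.

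So the right move is to finish the argument you already have rather than chase $O(\tau)$. Your per-token worry is not an obstacle: every $\epsilon_i\ge0$, hence $\max_i\epsilon_i\le\sum_i\epsilon_i\le2\tau|u|$. Your Lipschitz step then gives, uniformly over all queries, $|\operatorname{USim}(Q,u)-\operatorname{USim}(Q,v)|\le2|Q|\sqrt{\tau\max(|u|,|v|)}$, and the example above shows this is tight up to a constant factor.

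For $\gamma>1$ you do not need a bijective matching, which may not exist since several tokens of $u$ can share one near neighbor in $v$. Instead, $f\ge1-\tau$ forces each of the $\gamma$ nearest neighbors of $u_i$ in $v$ to lie within $\sqrt{2\gamma\epsilon_i}$ of $u_i$. Comparing $q$'s best token in $u$ with these $\gamma$ distinct tokens of $v$ gives the same bound with an extra factor $\sqrt{\gamma}$.

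This matches what the paper's own argument actually yields. Its sketch passes from the averaged deficit to token-level control via Markov's inequality and ends with $O(\tau^{z}|Q|)$ for some $z\le1$, then declares the definition satisfied. That last step does not follow, since $\tau^{z}\ge\tau$ for $\tau<1$. Your exponent $z=1/2$ is the correct one. The defensible conclusion is query consistency in the weakened form $O(\sqrt{c\,\tau}\,|Q|)$, not the $O(\tau\cdot|Q|)$ of the definition as written.
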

\begin{proof}
    The theorem follows directly from Lemmas \ref{lem:symmetry}--\ref{lem:consistency}.
\end{proof}

\begin{lemma}\label{lem:symmetry}
    The edge weight function $f$ is symmetric.
\end{lemma}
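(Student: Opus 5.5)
The plan is to verify symmetry directly from \defref{def:edge} by exchanging the roles of $u$ and $v$. Write $a(u,v) = \operatorname{\USim}(u,v)/|u|$. Then
\begin{equation*}
f(u,v) = \tfrac{1}{2}\bigl(a(u,v) + a(v,u)\bigr).
\end{equation*}
Swapping $u$ and $v$ gives
\begin{equation*}
f(v,u) = \tfrac{1}{2}\bigl(a(v,u) + a(u,v)\bigr).
\end{equation*}
Since real addition is commutative, $f(u,v) = f(v,u)$ for any two multi-vectors $u$ and $v$.

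The only point needing care is that each directional term $a(u,v)$ must be a well-defined function of the ordered pair alone. In particular, the token weights used inside $\operatorname{\USim}(u,v)$ must come from $u$'s own tokens, fixed by the pre-processing in \defref{def:edge}, and must not depend on which argument of $f$ the object $u$ occupies. Under that convention, $a(u,v)$ depends only on $(u,v)$, and the same is true of the $\gamma$-nearest-neighbor sets $\knn(u_i, v)$.

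It also helps to note how this works despite \USim{} being asymmetric, as shown in the preceding lemma. The asymmetry of $\operatorname{\USim}$ (and of $a$) is not removed term by term. Instead, $f$ is the symmetrization of $a$, so the asymmetric parts cancel when the two directions are summed.

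I expect no real obstacle here. The one subtlety is the tie-breaking in $\knn$: it must be deterministic, or at least value-invariant, since the distance sum over the $\gamma$ nearest neighbors is the same however ties are broken. Given that, $\operatorname{\USim}$ is a genuine function and the symmetry argument goes through.
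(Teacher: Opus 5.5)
Your proposal is correct and matches the paper's proof, which likewise swaps $u$ and $v$ in \equref{eq:mvsim} and uses the commutativity of addition. Your extra remarks on well-definedness are sound care that the paper leaves implicit: token weights are tied to the first argument's tokens, and the $\gamma$NN distance sum does not depend on how ties are broken.
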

\begin{proof}
    Based on the definition of $f$ in \equref{eq:mvsim}, we have
    \begin{equation}
       f(u,v) = \frac{1}{2}\left( \frac{\operatorname{\USim}(v, u)}{|v|} + \frac{\operatorname{\USim}(u, v)}{|u|} \right) = f(v,u)
    \end{equation}
\end{proof}
    
\begin{lemma}\label{lem:cardinality}
    The edge weight function $f$ is cardinality-robust.
\end{lemma}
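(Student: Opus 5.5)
We show that each of the two terms in $f(u,v)$ is a per-token average, so its magnitude does not grow with $|u|$ or $|v|$. From \equref{eq:USim},
\begin{equation*}
\frac{\operatorname{\USim}(u,v)}{|u|} = \frac{1}{|u|}\sum_{u_i \in u} w_{u_i}\cdot \frac{1}{\gamma}\sum_{v \in \knn(u_i, v)} \dist(u_i, v),
\end{equation*}
which is the mean, over the token vectors of $u$, of a bounded per-token quantity: a weighted average of $\gamma$ inner products. Define $s(u_i, v) := w_{u_i}\cdot \frac{1}{\gamma}\sum_{v' \in \knn(u_i, v)} \dist(u_i, v')$. If every token distance lies in some interval $[a,b]$ (for example $[-1,1]$ for normalized embeddings, or simply bounded as in the proof of the preceding lemma), then, since $w_{u_i}\in[0,1]$, each $s(u_i, v)$ lies in $[\min(0,a), \max(0,b)]$. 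Consequently
\begin{equation*}
\frac{\operatorname{\USim}(u,v)}{|u|} \in [\min(0,a), \max(0,b)]
\end{equation*}
independently of $|u|$. The same holds for $\operatorname{\USim}(v,u)/|v|$, so $f(u,v)$ lies in the same fixed range for all cardinalities. This is the direct contrast with the $\Theta(|u|)$ growth shown in the violation part of the previous lemma.

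To make the phrase ``does not systematically increase or decrease'' concrete, we use a replication argument. Let $u^{(m)}$ denote $u$ with each token vector duplicated $m$ times. The nearest-neighbor sets $\knn(u_i, v)$ are unchanged for $v$ fixed, so $\operatorname{\USim}(u^{(m)}, v) = m\,\operatorname{\USim}(u,v)$ and the normalized term is invariant. For the reverse direction, with $\gamma=1$, $\knn(v_j, u^{(m)})$ attains the same maximum as $\knn(v_j,u)$, so $\operatorname{\USim}(v,u^{(m)})/|v|$ is also unchanged. Hence $f(u^{(m)}, v) = f(u,v)$: inflating cardinality without adding new content leaves the edge weight exactly invariant, so $f$ has no built-in cardinality bias.

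The main obstacle is that cardinality robustness is stated informally, so the argument must fix a precise reading. We plan to present the boundedness claim as the core statement and the replication invariance as the formal witness of ``no systematic bias''.

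There is one subtlety for $\gamma>1$. With duplication, the $\gamma$ nearest neighbors in $u^{(m)}$ may include copies of a single token, so the reverse term can change. We will handle this either by restricting the invariance claim to $\gamma=1$, which covers MaxSim and Weighted Chamfer, or by noting that it still stays within the same bounded range. The boundedness argument alone suffices for the general case.
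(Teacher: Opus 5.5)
Your proposal is correct, and its core is the paper's own argument: each of $\operatorname{\USim}(u,v)/|u|$ and $\operatorname{\USim}(v,u)/|v|$ is a per-token average of bounded scores, so these terms, and hence $f$, lie in a fixed interval independent of $|u|$ and $|v|$, which removes the $\Theta(|u|)$ growth of raw $\operatorname{\USim}$ (the paper states the interval as $[-1,1]$, tacitly assuming unit-norm tokens, and your $[\min(0,a),\max(0,b)]$ is the general form). The duplication-invariance argument is an addition the paper does not make; it holds for $\gamma=1$ provided duplicated tokens inherit their weights, and you are right that for $\gamma>1$ the copies crowd the $\gamma$NN sets and weakly raise the reverse term, so only the boundedness part carries over in general.
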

\begin{proof}
    For any multi-vectors $u$ and $v$, the term $\frac{\operatorname{\USim}(u, v)}{|u|}$ represents the average similarity from \zeng{token} vectors of $u$ to $v$, which lies in $[-1, 1]$. Similarly, $\frac{\operatorname{\USim}(v, u)}{|v|} \in [-1, 1]$. 
    Since $f(u, v)$ is the average of these two values in $[-1, 1]$, we have $f(u, v) \in [-1, 1]$. Moreover, each term is normalized by the cardinality of its first argument, ensuring that $f(u,v)$ does not systematically increase with $|u|$ or $|v|$. This satisfies the definition of cardinality robustness.
\end{proof}


\begin{lemma}\label{lem:consistency}
    The edge weight function $f$ is query-consistent.
\end{lemma}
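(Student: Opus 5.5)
We need to show: if $f(u,v)\ge 1-\tau$, then $|\USim(Q,u)-\USim(Q,v)| = O(\tau|Q|)$. The argument will rely on unit-norm token vectors with inner-product distance. This assumption is already implicit in the proof of Lemma~\ref{lem:cardinality}, where normalized similarities lie in $[-1,1]$. We also take $\gamma=1$ and unit token weights in the edge function. For unit vectors, $\dist(a,b)=\langle a,b\rangle = 1-\tfrac12\|a-b\|^2$.

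\textbf{Step 1: both directions are nearly perfect.} Each normalized term $\USim(u,v)/|u|$ and $\USim(v,u)/|v|$ is at most $1$. So if their average is at least $1-\tau$, each is at least $1-2\tau$. Write $\USim(u,v)/|u|$ as an average over $u_i\in u$ of $1-\tfrac12\|u_i-\nn(u_i,v)\|^2$. Then the average squared gap is at most $4\tau$, and symmetrically from $v$ to $u$.

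\textbf{Step 2: transfer to query tokens.} Fix a query token $q$, and let $m_u(q)=\max_{x\in u}\langle q,x\rangle$, attained at some $u^\ast$. Let $v^\ast$ be the nearest neighbor of $u^\ast$ in $v$. Then
\[ m_v(q)\ge \langle q,v^\ast\rangle \ge m_u(q)-\|u^\ast-v^\ast\|, \]
by Cauchy--Schwarz with $\|q\|=1$. The symmetric bound holds with the roles of $u$ and $v$ swapped. Summing over $q$ with weights $w_q\le1$ gives
\[ |\USim(Q,u)-\USim(Q,v)|\le |Q|\cdot\varepsilon, \]
where $\varepsilon$ bounds the per-token matching error. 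For $\gamma>1$ the same argument applies to each of the $\gamma$ order statistics, using that $k$-th largest values are $1$-Lipschitz under a matching.

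\textbf{Main obstacle: converting the average bound into a per-token one.} Step 1 only controls the \emph{average} squared gap. Step 2 needs control of the specific token $u^\ast$ that maximizes $q$, which could be an outlier with a large gap.

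The first remedy is to use the maximum-normalization observation. A single token with gap $\delta$ contributes $\delta^2/(2|u|)$ to the deficit, so $\delta^2\le 4\tau|u|$. This yields a bound depending on $|u|$, which is acceptable if cardinalities are treated as bounded constants, as is usual in the paper's $O(\cdot)$ conventions.

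Alternatively, I would interpret the bound in the averaged sense the definition intends. Averaging over $q$ against the matching gives $\varepsilon=O(\sqrt{\tau})$, and this becomes $O(\tau)$ in the linear regime if one measures dissimilarity by $1-\langle\cdot,\cdot\rangle$ directly rather than by norm. I expect to settle on the cleanest version: bound $m_u(q)-m_v(q)$ by $1-\langle u^\ast,v^\ast\rangle$ up to constants, using that near-unit inner products force near-collinearity. Then sum these deficits, which are controlled by $2\tau$ times cardinality, to reach $O(\tau\,|Q|)$ treating $|u|,|v|$ as constants.
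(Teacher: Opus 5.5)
Steps~1 and~2 are sound, but there is a genuine gap in your final step, the one you call the cleanest version. The claim that $m_u(q)-m_v(q)$ is bounded by a constant times $1-\langle u^\ast,v^\ast\rangle$ is false. Near-collinearity of unit vectors only gives $\|u^\ast-v^\ast\|=\sqrt{2(1-\langle u^\ast,v^\ast\rangle)}$, and a query token orthogonal to $u^\ast$ inside the plane of $u^\ast,v^\ast$ sees this whole square-root gap.

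Concretely, with $\gamma=1$ and unit weights, take $u=(e_1,e_3)$, $v=(\cos\theta\,e_1+\sin\theta\,e_2,\ e_3)$ and $Q=(e_2,e_2)$. Then $f(u,v)=\tfrac{1+\cos\theta}{2}$, so $\tau=\sin^2(\theta/2)$. Meanwhile $\operatorname{USim}(Q,u)=0$ and $\operatorname{USim}(Q,v)=|Q|\sin\theta=2|Q|\sqrt{\tau(1-\tau)}$. The ratio to $\tau|Q|$ diverges as $\tau\to0$ even though all cardinalities equal $2$. So no argument can deliver $O(\tau|Q|)$ in your setting, which is also the paper's default setting.

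Your first remedy does not give a linear rate either: $\delta^2\le 4\tau|u|$ means $\delta\le 2\sqrt{\tau|u|}$. What Steps~1--2 plus that per-token bound actually prove is $|\operatorname{USim}(Q,u)-\operatorname{USim}(Q,v)|\le 2|Q|\sqrt{\tau\max(|u|,|v|)}$. This is tight up to constants. Perturbing one token among $c$ in the same way gives $\tau=(1-\cos\theta)/c$ and a discrepancy of roughly $|Q|\sqrt{2c\tau}$. Hence ``averaging over $q$'' cannot remove the cardinality factor: an adversarial $Q$ can send every query token to the single worst-matched token.

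The paper's sketch (Markov's inequality on the number of poorly matched tokens, then bounding the gap) likewise ends at $O(\tau^{z}|Q|)$ with $z\le 1$, not at the linear bound of the definition. Your deterministic per-token bound is essentially the extreme case of that Markov step. The honest deliverable is the $\sqrt{\tau}$ bound with its explicit dependence on $\max(|u|,|v|)$, stated as such.

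The extension to $\gamma>1$ also fails as written. With $\gamma=1$ inside the edge function, the nearest-neighbour map is not a bijection, so order statistics need not be $1$-Lipschitz along it. For orthogonal unit vectors $a,b$, the multi-vectors $u=(a,a,b)$ and $v=(a,b,b)$ have $f(u,v)=1$, yet for $\gamma=2$ and $q=a$ the per-token scores are $1$ and $\tfrac12$. You can repair this by using the same $\gamma$ inside $f$. Then the $\gamma$ nearest neighbours in $v$ of the best token $u^\ast_1$ are $\gamma$ distinct tokens, each within the per-token gap of $u^\ast_1$ (now with an extra factor $\sqrt{\gamma}$). Since $\langle q,u^\ast_1\rangle$ already dominates the top-$\gamma$ average over $u$, the same $\sqrt{\tau}$-type bound follows.
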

\begin{proof}[Proof Sketch]
Due to page limitations, we provide a proof sketch here, and more detailed proof can be found in our online full paper \cite{mvhnsw-2026}.
We first prove that when the edge weight $f(V_1, V_2)$ is sufficiently high, the expected number of token vectors in $V_1$ \zeng{that are highly similar to token vectors in $V_2$} is bounded. 
Applying Markov's inequality \cite{DBLP:books/daglib/0012859} then yields a constraint on the distance discrepancy for any query $Q$. 
By algebraically bounding the minimum distance gap, we deduce that
$\vert\operatorname{\USim}(Q, V_1) - \operatorname{\USim}(Q, V_2) \vert \leq O(\tau^z \vert Q \vert)$ \ybh{with $z \leq 1$ holds strictly}, which satisfies the definition of query consistency.
\end{proof}

\setlength{\textfloatsep}{2ex}
\setlength{\floatsep}{2ex}
\begin{algorithm}[t]
	\caption{Insert Multi-Vector $D_i$ into \IndexName{}}\label{alg:insertion}
	\KwIn{Node degree $M$, size $efC$ of construction candidate list, normalization factor $m_L$ for layer sampling}
	\KwOut{\IndexName{} after inserting the new multi-vector $D_i$}
    $l_i \gets \lfloor -\ln(\operatorname{uniform}(0, 1)) \cdot m_L \rfloor$\;
    $ep \gets$ the global entry point at layer $l_{ep}$\;
     \ForEach{layer $l_c \gets l_{ep}$ \KwTo $l_i + 1$} {
        $ep \gets$ \SearchLayer{$D_i, l_c, ep$} with $ef=1$\;
     }
     \For{$l_c \gets \min(l_i, l_{ep})$ \KwTo $0$} {
        $Cand \gets $ \SearchLayer{$D_i, l_c, ep$} with $ef=efC$\;
        $S \gets $ $M$ nearest candidates to $D_i$ from $Cand$\;
        Connect $D_i$ with each $D_{j} \in S$ at layer $l_c$\;
        \ForEach{multi-vector $D_j \in S$} {
            \If{$\operatorname{degree}(D_j) > M$} {
                Trim number of connected neighbor of $D_j$ to $M$\;
            }
        }
        $ep \gets Cand$\;
     }
     \If{$l_i$ is higher than current global entry point} {
        Change global entry point to $D_i$\;
     }
\end{algorithm}
\afterpage{\global\setlength{\textfloatsep}{\oldtextfloatsep}}
\afterpage{\global\setlength{\floatsep}{\oldfloatsep}}

\subsection{Index Construction Algorithm}\label{sec:index-construct}

\subsubsection{Main Idea.}
\zeng{The \IndexName index is constructed by sequentially inserting each multi-vector into a hierarchical graph. 
For each insertion, the algorithm first randomly assigns a layer $l_i$ to the new node, navigates from the topmost layer down to $l_i$ to obtain an entry point $ep$, and then inserts the node from layer $l_i$ down to the base layer. At each layer in this downward traversal, the insertion consists of two steps:}

\textbf{Navigation}: Starting from the global entry point, the algorithm invokes the \SearchLayer routine of HNSW \cite{malkov2018efficient} to perform a greedy search at the current layer and locate the nearest neighbor(s) of the new multi-vector.
This neighbor serves as the entry point(s) for the next (lower) layer.

\textbf{Connection}: During the traversal in current layer, the algorithm also retrieves $efC$ candidate neighbors, selects up to $M$ closest ones, and establishes bidirectional edges. 
\zeng{The degree of each node is maintained below $M$ through pruning when necessary.}

\subsubsection{Algorithm Details.} \label{sec:algodetails}
\algoref{alg:insertion} illustrates the detailed steps:

\zeng{\textbf{Localization} (Lines 1-4).
We first randomly assign a layer $l_i$ to the new multi-vector $D_i$.}
Unlike standard HNSW, our \SearchLayer uses \ybh{the edge weight function  as the multi-vector similarity metric.
The resulting node $ep$ serves as the entry point for the next layer, and this process repeats until the target layer $l_i$ is reached.}

\zeng{\textbf{Insertion} (Lines 5-12). 
Starting from the assigned layer $l_i$ down to the base layer, the algorithm inserts the new multi-vector $D_i$ iteratively. 
At each layer, the \SearchLayer routine retrieves a set $Cand$ of candidates, from which the top-$M$ nearest neighbors are selected to establish bidirectional connections with $D_i$. 
If a neighbor $D_j$'s degree exceeds $M$ after connection, its neighbor list is pruned to retain only the top-$M$ closest neighbors. 
The nearest neighbor in $Cand$ also serves as the entry point for the next lower layer.}

\zeng{If the assigned layer $l_i$ of $D_i$ exceeds that of the current global entry point, $D_i$ becomes the new global entry point (Lines 13-14).
This ensures that the entry point always resides at the highest layer.}

\subsubsection{Complexity Analysis.} \label{sec:construction-complexity}
\zeng{The complexity of \algoref{alg:insertion} is derived from the complexity analysis of standard HNSW \cite{malkov2018efficient}.}


\textbf{Time Complexity.} 
HNSW constructs an index for $n$ token vectors in $O(n\log{n})$ time.
By contrast, our construction algorithm introduces one additional cost per layer: computing $\operatorname{\USim}$ for multi-vector pairs during traversal.
Each (exact) $\operatorname{\USim}$ computation takes $O(dc^2\log{\gamma})$ time.
As $d, c$ and $\gamma$ are constants, the overall construction time remains $O(n\log{n})$, but with a larger constant factor.


\textbf{Space Complexity.}
HNSW stores $n$ token vectors using $O(n)$ space. 
Our {\IndexName} stores multi-vectors instead, where each node contains $c$ token vectors of dimension $d$, increasing per-node storage by a factor of $c$. Since $d$ and $c$ are constants, the overall space complexity remains $O(n)$.

\subsection{Accelerating {\USim} Computation} \label{subsec:EUC}

The computational cost of \USim, as revealed by our complexity analysis, is the main bottleneck for building {\IndexName}. Consequently, we target its acceleration to reduce total construction time.

\setlength{\textfloatsep}{2ex}
\setlength{\floatsep}{2ex}
\begin{algorithm}[t]
	\caption{Accelerate {\USim} Computation}\label{alg:acmaxsim}
	\KwIn{Two multi-vectors $D$ and $Q$ associated with weighted $w_q$, parameter $\beta$}
	\KwOut{Approximate result $ans$}
    Clusters $C_{1}, \cdots, C_{\sqrt{|Q|}} \gets$ $k$-means over $\{qv_i \in Q\}$\; 
    \ForEach(){$i$-th cluster with centroid $C_i$} {
        Initialize an empty $\beta$-sized heap $S_i$ keyed by distance\;
        \ForEach{token vector $d_j \in D$} {
            $dist \gets \dis(C_i, d_j)$\;
            \ybh{\If{$\vert S_i \vert < \beta$ \textbf{or} key of top \zeng{tuple} in $S_i < dist$} {
                Remove top \zeng{tuple} if $\vert S_i \vert = \beta$\;
                Insert $(dist, d_j)$ into $S_i$\;
            }}
        }
    }
    \For{cluster $i \gets 1$ \KwTo $\alpha$} {
        \ForEach{token vector $qv_j \in$ cluster $C_i$} {
            $tdis \gets $ sum of distance of $\knn$ to $qv_j$ from $S_i$;
            $ans \gets ans + \frac{w_{qv_j}}{\gamma}  \cdot tdis$;
        }
    }
    \KwRet{ans}\;
\end{algorithm}
\afterpage{\global\setlength{\textfloatsep}{\oldtextfloatsep}}
\afterpage{\global\setlength{\floatsep}{\oldfloatsep}}

\subsubsection{Main Idea: Filter-and-Refine with Clustering.}
We accelerate the computation of $\operatorname{\USim}(Q, D)$ using a \textit{filter-and-refine} strategy coupled with \textit{clustering} to balance efficiency and accuracy. This method operates at two complementary granularities:

\textbf{(1) Cluster Level Filtering}: 
The \zeng{token} vectors in the query $Q$ are partitioned into $\sqrt{\vert Q \vert}$ clusters via k-means clustering.
This $\sqrt{\vert Q \vert}$ choice balances the trade-off between granularity and efficiency, as it grows sublinearly with query sizes and is empirically effective.
For each cluster centroid, we perform $\beta$NN search over token vectors in $D$ to collect a candidate set of size \ybh{$\beta = \max\{\gamma, \sqrt{\vert D \vert}\}$}. 

\textbf{(2) Token-Vector Level Refinement}: 
Each \zeng{token} vector $qv_i \in Q$ then searches only within its cluster's candidate set to find its $\gamma$NN.
Since this step operates only on the pruned candidate set of size $\beta$ (rather than all $|D|$ token vectors), it retains high efficiency while preserving accuracy with negligible degradation.

\subsubsection{Algorithm Details.}
\algoref{alg:acmaxsim} presents the detailed algorithm. 
Line 1 clusters the token vectors $qv_i \in Q$ into $\sqrt{|Q|}$ centroids via k-means clusering \cite{DBLP:books/mk/HanKP2011}. For \textit{filtering} (Lines 2--8), we perform a $\beta$NN search for each centroid $C_i$ over all token vectors in $D$. Finally, in the \textit{refinement} step (Lines 9--12), we identify the $\gamma$NN for each query token vector $qv_i \in Q$ by scanning the pruned candidate sets of its cluster and compute the approximate $\operatorname{\USim}$ value.


\fakeparagraph{Complexity Analysis}
Since the number of clusters for high-dimensional vectors is typically set to \(\sqrt{n}\), we obtain \(\sqrt{|Q|}\) clusters in Line 1 in \(O(|Q| \sqrt{|Q|})\) time.  
To balance the time cost between the filtering and refinement steps, \(\beta\) is set to \(\sqrt{|D|}\).  
Let \(c = \max(|Q|, |D|)\) be the maximum number of token vectors in a multi-vector. The overall time cost of \algoref{alg:acmaxsim} is then \(O(c \sqrt{c})\).

\fakeparagraph{Parallelization} 
The filter-and-refine framework of this optimization algorithm lends itself naturally to parallel execution.
In the filtering phase, the $\beta$NN search for all $\sqrt{|Q|}$ centroids are independent and can thus be executed in parallel.
Similarly, in the refinement phase, \zeng{token} vectors from different clusters can be processed concurrently.
Overall, the pipeline enables $\sqrt{|Q|}$-way parallelism.

\newcommand{\hitSum}{\text{Contrib}}
\newcommand{\priority}{\text{Priority}}
\newcommand{\Wt}{Weight}

\section{\MakeUppercase{Augmented Search Algorithm for Multi-Vector Similarity Search}}	\label{sec:SearchAugumentation}

This section introduces our augmented search algorithm using \IndexName.
We begin with the main idea in \secref{sec:search-idea}, then detail the algorithm in \secref{sec:search-detail}, and finally analyze its complexity in \secref{sec:search-complexity}.

\subsection{Main Idea}\label{sec:search-idea}

\subsubsection{Motivation.}
Using {\IndexName}, a straightforward solution to unified multi-vector similarity search is to traverse this index from top to bottom while progressively collecting $k$ nearest candidates to the query $Q$, following the search method of HNSW \cite{malkov2018efficient}.

However, due to the dynamic nature of queries and the finite degree constraints in \IndexName{}, a critical issue arises: \textit{multi-vectors in the exact result may be topologically distant from, or even disconnected from, the traversed nodes}. 
This is because the index is constructed based on multi-vector similarity among data objects, but the search procedure relies on similarity between data objects and the query. 
This mismatch can prevent the search from reaching relevant candidates, leading to impaired recall and causing the algorithm to become trapped in local optima.




\subsubsection{Key Idea.}
To address this issue, we enhance the graph traversal by dynamically expanding the candidate list.
While exhaustively scanning all multi-vectors would achieve this goal in principle, it is computationally intractable for large-scale datasets.
\zeng{Thus, we propose a \textit{local-to-global} inference strategy to infer promising global candidates using locally identified candidates.
This strategy starts from local neighborhoods in the graph index and leverages the pre-computed correlations to reach distant but relevant multi-vectors that might otherwise remain undiscovered.}
The core idea is twofold: 

\textbf{(1) Offline Pre-Processing}: 
We pre-compute for each \zeng{token} vector a list of multi-vectors containing \zeng{token} vectors correlated with it in the dataset, stored in an Auxiliary Navigation Table (ANT). 

\textbf{(2) Online Search}: 
When exploring a multi-vector $V$ from the dynamic candidate list, we expand the search scope by leveraging ANT for multi-vectors containing high correlated \zeng{token} vectors of $V$'s \zeng{token} vectors (and thus similar to those in $Q$), bridging structural gaps in the graph index.


\subsection{Augmented Search Algorithm}\label{sec:search-detail}

\subsubsection{Core Components.}

To realize the above idea, we introduce two technical components: 
\textit{auxiliary navigation table} and \textit{dynamic candidate expansion strategy}.


\textbf{(1) Auxiliary Navigation Table (ANT).}
For each token vector $v$ in every multi-vector $V$ in the dataset $\mathcal{D}$, we pre-compute an $M$-sized list of relevant multi-vectors to $v$, denoted by $\mathcal{ANT}(v)$, where $M$ is the node degree in \IndexName. 
To construct this list, we first use the standard HNSW index \cite{malkov2018efficient} to retrieve a set $A$ of $M'$ nearest token vectors to $v$ from all token vectors, excluding those within $V$.
Here, \(M'\) is set to a constant multiple of \(M\) (\eg \(5M\) in our implementation) to ensure sufficient candidates after removal.
We then consider the multi-vectors that contain at least one token vector in $A$, and for each such multi-vector, we compute the sum of distances between \(v\) and its top-\(\gamma\) nearest token vectors in \(A\).
Finally, we retain the top-\(M\) multi-vectors based on this sum.

\textbf{(2) Dynamic Candidate Expansion Strategy.}
During online search, after exploring the connected neighbors of a candidate multi-vector $V$, we invoke dynamic candidate expansions as follows. 
For each \zeng{token} vector $v \in V$, we collect all the multi-vectors in $\mathcal{ANT}(v)$ via ANT and form the expansion candidate set $\mathcal{E} = \bigcup_{v \in V} \mathcal{ANT}(v)$.
This set has up to $|V| \cdot M$ distinct multi-vectors, so computing the $\operatorname{\USim}$ scores between these multi-vectors and the query $Q$ is costly.
Instead, we quantify the priority of each multi-vector $V \in \mathcal{E}$ based on \equref{eq:hitsum}–\equref{equ:priority} and only select the top-$M$ multi-vectors to be expanded from $\mathcal{E}$. 

\ybh{\vspace{-3ex}}
\begin{align}
    \hitSum(v, Q) &= \sum_{q \in Q} \mathbb{I}\big(v \in \gamma\operatorname{NN}(q, V)\big) \cdot w_{q} \cdot \dis(q, v) \label{eq:hitsum} \\
    \operatorname{\Wt}(v, Q) &= \frac{\exp \left( \hitSum(v, Q) \right)}{\sum_{v' \in V} \exp \left( \hitSum(v', Q) \right)} \label{eq:wt} \\
    \priority(V',Q) &= \max_{v \in V} \Bigg\{ \operatorname{\Wt}(v, Q) \sum_{v' \in \gamma \operatorname{NN}(v,V')} \dist(v, v') \Bigg\}
    \label{equ:priority}
\end{align}

\zeng{Specifically, $\hitSum(v, Q)$ defined in \equref{eq:hitsum} measures the contribution of a token vector $v \in V$ to the unified multi-vector similarity $\operatorname{\USim}(Q,V)$, where $\mathbb{I}(\cdot)$ is an indicator function. 
$\operatorname{\Wt}(v, Q)$ in \equref{eq:wt} is derived using softmax normalization to amplify high-contribution token vectors and suppress low-impact ones.
Based on this, $\priority(V',Q)$ quantifies the priority score between the candidate multi-vector $V'$ and the query $Q$.
To avoid the expensive computation of $\operatorname{\USim}(Q, V')$, $\priority(V',Q)$ estimates the similarity between $V'$ and $Q$ by considering, among all token vectors $v \in V$, the one that best matches $V'$ (via its $\gamma \operatorname{NN}$ in $V'$), weighted by $v$'s relevance to $Q$.
We adopt this max-based aggregation for two reasons: (1) to accelerate priority computation, and (2) to avoid confining the online search to graph-connected neighbors of $V$ and uncover topologically distant yet closely relevant candidates.}

\subsubsection{Algorithm Details.}

\newcommand{\ExN}{ExN}
\setlength{\textfloatsep}{2ex}
\setlength{\floatsep}{2ex}
\begin{algorithm}[t] 
	\caption{Augmented \textsf{SEARCH-LAYER}} \label{algo:augumentedsearchlayer}
	\KwIn{A query multi-vector $Q$, {\IndexName} index, ANT, entry points $ep$ at current layer $l_c$, search candidate list with size $efS$}
	\KwOut{$efS$ most similar multi-vectors to $Q$ at layer $l_c$}
    $Cand \gets ep, Queue \gets ep$\;
    \While{$Queue$ is not empty} {
        $(V_c,\dis_c) \gets$ most dissimilar node to $Q$ in $Cand$\;
        $(V_q,\dis_q) \gets$ most similar node to $Q$ in $Queue$, pop it\;
        \lIf{$\dis_c > \dis_q$}{\textbf{break}}
        \ybh{
        $\ExN \gets $ direct neighbors of $V_q$ in layer $l_c$\;
        \tcp{Dynamic Expansion Strategy}
        Max-heap $heap \gets \emptyset$ \tcp*[r]{Key: $\priority(\cdot, Q)$}
        \ForEach{token vector $v \in V_q$} {
            $V' \gets$ the first multi-vector in $\mathcal{ANT}(v)$\;
            $heap \gets heap \cup \{(\priority(V',Q), V', v)\}$\;
        }
        \ForEach{iteration $i \gets 1$ \KwTo $M$} {
            $(V, v) \gets$ pop the top tuple from $heap$\;
            $\ExN \gets \ExN \cup \{V\}$\;
            $V' \gets$ the next multi-vector in $\mathcal{ANT}(v)$\;
            $heap \gets heap \cup \{(\priority(V',Q), V', v)\}$\;
        }
        }
        \ForEach{unvisited multi-vector $V \in \ExN$} {
            Mark $V$ as visited, $\dis \gets \operatorname{\USim}(Q, V)$\;
            $(V_c,\dis_c) \gets$ most dissimilar node to $Q$ in $Cand$\;
            \If{$|Cand| < efS$ \textbf{or} $\dis > \dis_c$} {
                Insert $(V, \dis)$ into $Cand$, keep $|Cand| \le efS$\;
                Push $(V, \dis)$ into $Queue$\;
            }
        }
    }
    \KwRet{$Cand$}\;
\end{algorithm}
\afterpage{\global\setlength{\textfloatsep}{\oldtextfloatsep}}
\afterpage{\global\setlength{\floatsep}{\oldfloatsep}}

            
\zeng{\algoref{algo:augumentedsearchlayer} optimizes the \SearchLayer algorithm in the HNSW index \cite{malkov2018efficient} by incorporating our \textit{dynamic expansion strategy}.
The search procedure initializes $Cand$ as the candidate answer set and $Queue$ as the priority queue during traversal, both with the entry points $ep$ (Line 1).
$Cand$ maintains the top-$efS$ most similar results found so far, while $Queue$ stores nodes pending expansion, ordered by their similarity to the query $Q$.
At each iteration, let $V_c \in Cand$ be the candidate most dissimilar to $Q$ and $V_q \in Queue$ the multi-vector most similar to $Q$ so far. 
If the worst candidate $V_c$ is more similar to $Q$ than the best node $V_q$ awaiting for expansion, the search loop terminates (Line 5); otherwise, the algorithm proceeds to expand from $V_q$.} 

The set of neighbors $\ExN$ expanded from $V_q$ consists of two subsets: (1) the direct neighbors of $V_q$ in {\IndexName} (Line 6), and (2) those generated by the dynamic expansion strategy (Lines 7--15). 
For the latter, each token vector $v \in V_q$ contributes the first multi‑vector in its associated list $\mathcal{ANT}(v)$ as an initial candidate. This candidate $V'$ is inserted into a max‑heap keyed by $\priority(V', Q)$.
Lines 11--14 then repeatedly extract the top tuple from this heap and append it to $\ExN$, the set of nodes awaiting expansion.
This extraction repeats $M$ times, so $\ExN$ contains at most $2M$ multi-vectors, where $M$ is the node degree.

Finally, Lines 16--21 process each unvisited node $V \in \ExN$.
If this node $V$ is more similar to $Q$ than any node in $Cand$ (or if $|Cand| < efS$), then $V$ is inserted into both $Cand$ and $Queue$, and $Cand$ is trimmed to maintain a size of at most $efS$.

\fakeparagraph{Remark}
The dynamic expansion strategy is designed to select $M$ multi‑vectors from the union set $\mathcal{E} = \bigcup_{v \in V_q} \mathcal{ANT}(v)$ according to their $\priority(\cdot, Q)$ defined in \equref{equ:priority}. Rather than exhaustively scanning the entire set $\mathcal{E}$, the algorithm adaptively retrieves one multi‑vector from each $\mathcal{ANT}(v)$ list at a time in Lines 11–15.

This optimization is effective for two reasons. First, $\operatorname{\Wt}(v, Q)$ in \equref{equ:priority} remains constant for a given token vector $v$ across all candidate multi‑vectors in $\mathcal{ANT}(v)$.
Second, each list $\mathcal{ANT}(v)$ is pre‑ordered by the remaining term $\sum_{v' \in \gamma\operatorname{NN}(v, V')} \dist(v, v')$.
Consequently, for any list $\mathcal{ANT}(v)$, the front multi‑vector always has a higher priority than any subsequent multi‑vector in the same list. 
This monotonic property ensures that Lines 8--15 yield the top‑$M$ multi‑vectors from $\mathcal{E}$ without requiring a full scanning of $\mathcal{E}$.

\subsection{Complexity Analysis}\label{sec:search-complexity}
We mainly analyze the time complexity of our search algorithm, as the ANT requires only \(O(nc \cdot M) \sim O(n)\) space. The time complexity is examined in two phases: \textit{offline pre-processing} and \textit{online search}.

\textbf{Offline Pre-Processing.} 
The time overhead for constructing the ANT consists of two main steps: (1) building a single-vector HNSW index over all the token vectors $\mathcal{V_D}$, and (2) executing a $k$NN search on this index for every token vector. This leads to an overall offline pre-processing complexity of $\mathcal{O}(|\mathcal{V_D}| \log |\mathcal{V_D}|)$.

\textbf{Online Search.} 
\algoref{algo:augumentedsearchlayer} optimizes the standard \SearchLayer procedure by adapting the dynamic candidate expansion strategy. 
This algorithm additionally performs $O(M + |V_q|)$ heap operations for each explored node $V_q$. As $M$ and $|V_q|$ are constants in practice, each online search requires $O(\log{n})$ time.

\newcommand{\lifestyle}{Lifestyle}
\newcommand{\pooled}{Pooled}
\newcommand{\recreation}{Recreation}
\newcommand{\writing}{Writing}
\newcommand{\technology}{Technology}
\newcommand{\science}{Science}
\newcommand{\msmacro}{Macro}

\newcommand{\linearscan}{Linear Scan}
\newcommand{\colbert}{ColBERT}
\newcommand{\colbertvii}{ColBERTv2}
\newcommand{\xtr}{XTR}
\newcommand{\warp}{WARP}

\section{\MakeUppercase{Experimental Study}}
\label{sec:experimentalstudy}


\subsection{Experimental Setup}

\begin{table}[t]
\centering
\captionsetup{skip=2.4pt}
\caption{Statistics of datasets (M: Million, B: Billion)}\label{tab:datasets} 


\begin{tabular}{cccc}
\hline
\textbf{Dataset} & \textbf{Dim.} & \textbf{\#(Token Vector)} & \textbf{\#(Query)}\\
\hline
\recreation & 128--768 & 263.0 M & 563\\
\lifestyle & 128--768 & 268.9 M & 417\\
\writing & 128--768 & 277.1 M & 497\\
\science & 128--768 & 343.6 M & 538 \\
\technology{} & 128--768 & 127.6 M & 916 \\
\pooled & 128--768 & 2.429 B & 2.931 K \\
\msmacro & 128--768 & 8.842 B & 101.1 K \\
\hline
\end{tabular}
\end{table}

\fakeparagraph{Datasets} 
We evaluate our method on seven multi-vector datasets derived from two real-world benchmark corpora: LoTTE \cite{santhanam2022colbertv2} and MS MARCO \cite{nguyen2016ms}. 
As summarized in \tabref{tab:datasets}, these datasets cover a broad range of scales, from 263 million to nearly 9 billion token vectors. 
Each dataset provides token-level embeddings generated by the representation model of \colbertvii \cite{santhanam2022colbertv2} in four dimensionalities (128, 256, 512, and 768).
Each dataset also includes a set of queries of varying scales, which are similarly processed into token-level embeddings (\ie multi-vectors).
\zeng{To establish the ground truth, we perform a brute-force linear scan to compute the exact $\operatorname{USim}$ and rank the results to identify the exact answer.}
Notice that the above data preparation pipeline is commonly adopted by existing research on multi-vector similarity search \cite{khattab2020colbert, gao2021coil, santhanam2022colbertv2, santhanam2022plaid, lee2023rethinking, li2023citadel, scheerer2025warp}.
All datasets use the MaxSim multi-vector similarity function.

\begin{figure*}[t]
    \centering
    \begin{subfigure}{0.23\textwidth}
        \centering
        \includegraphics[width=\textwidth]{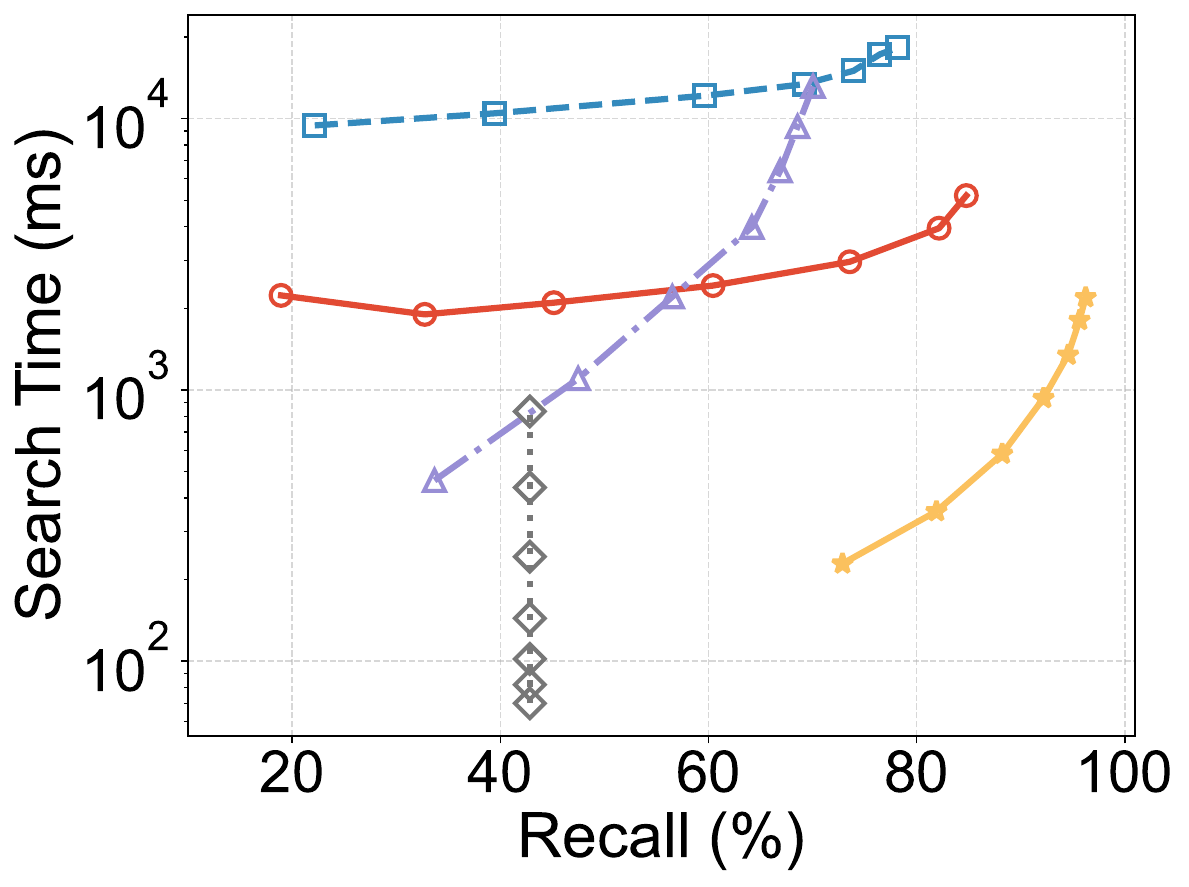}\vspace{-1.0ex}
        \caption{\recreation}
        \label{fig:recreation-default}
    \end{subfigure}
    \hfill
    \begin{subfigure}{0.23\textwidth}
        \centering
        \includegraphics[width=\textwidth]{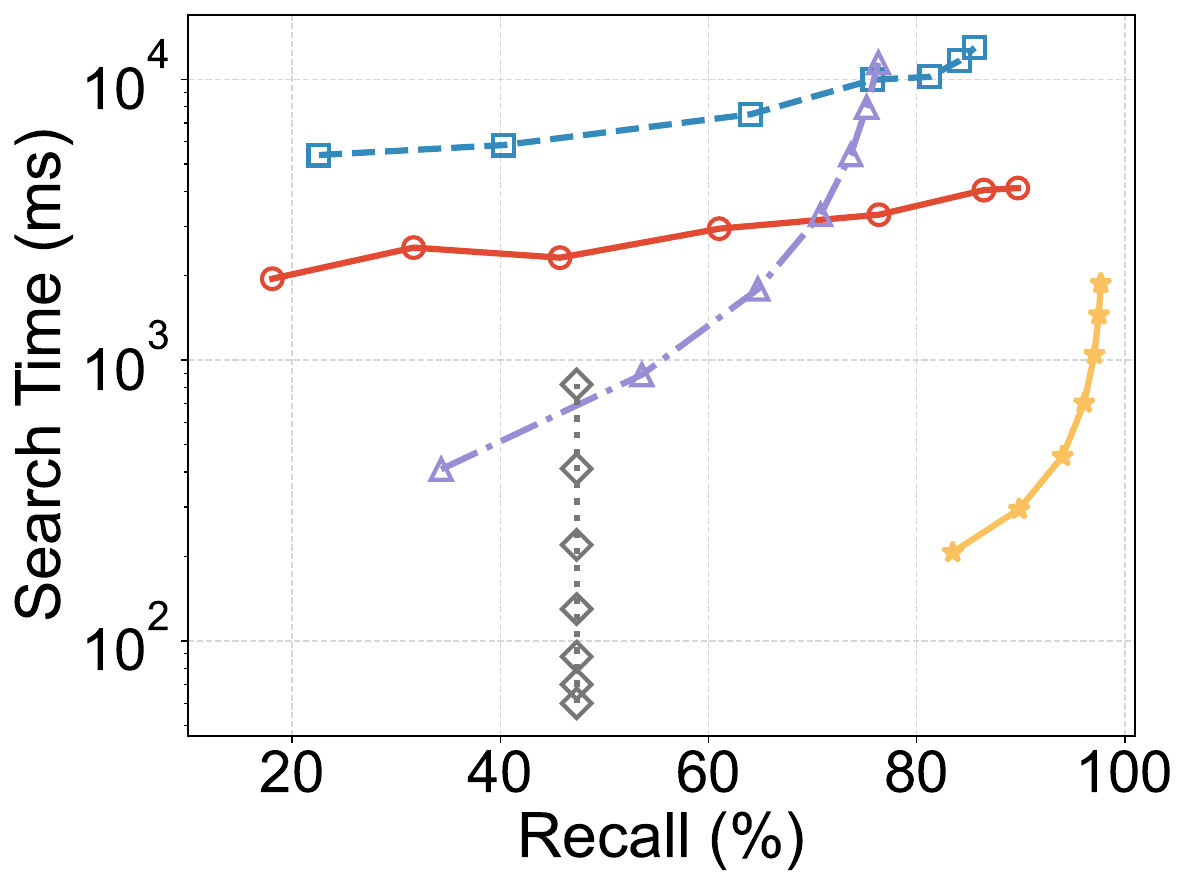}\vspace{-1.0ex}
        \caption{\lifestyle}
        \label{fig:lifestyle-default}
    \end{subfigure}
    \hfill
    \begin{subfigure}{0.23\textwidth}
        \centering
        \includegraphics[width=\textwidth]{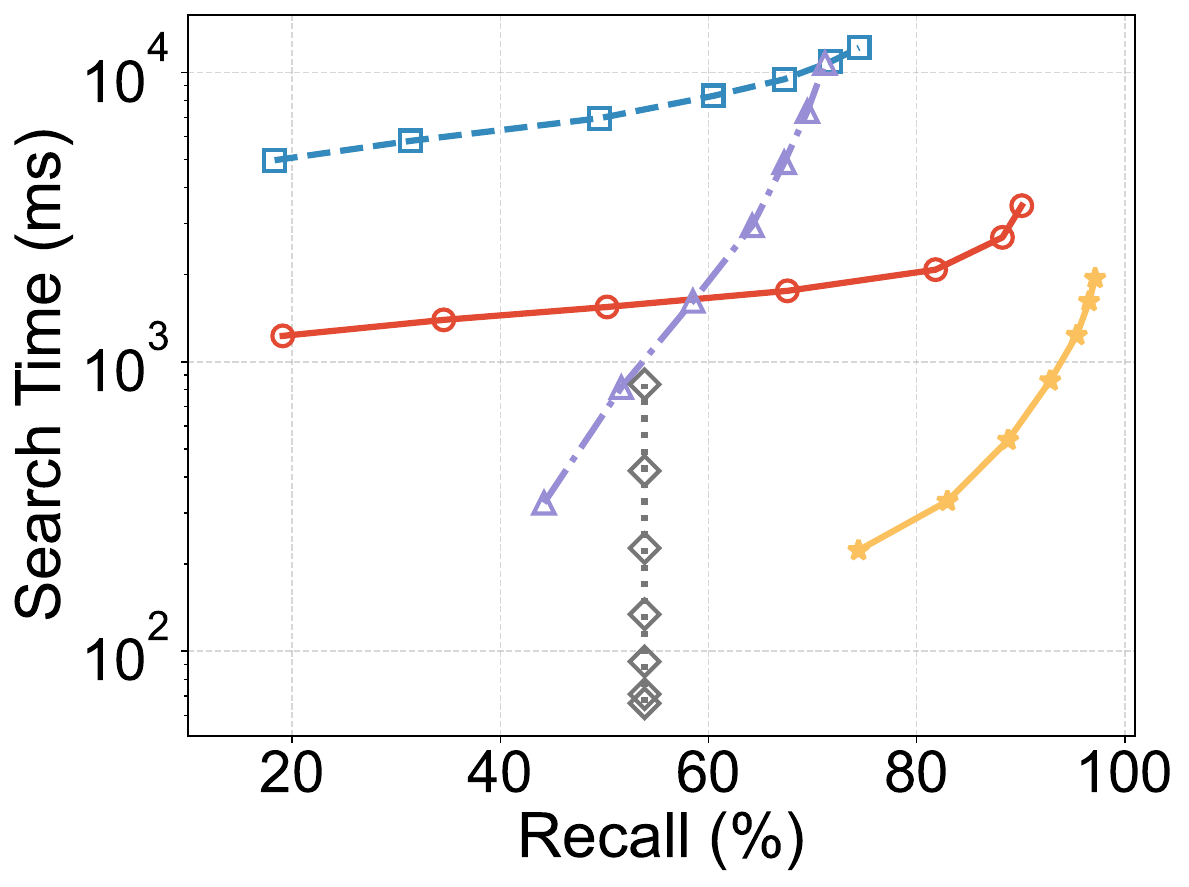}\vspace{-1.0ex}
        \caption{\writing}
        \label{fig:writing-default}
    \end{subfigure}
    \hfill
    \begin{subfigure}{0.23\textwidth}
        \centering
        \includegraphics[width=\textwidth]{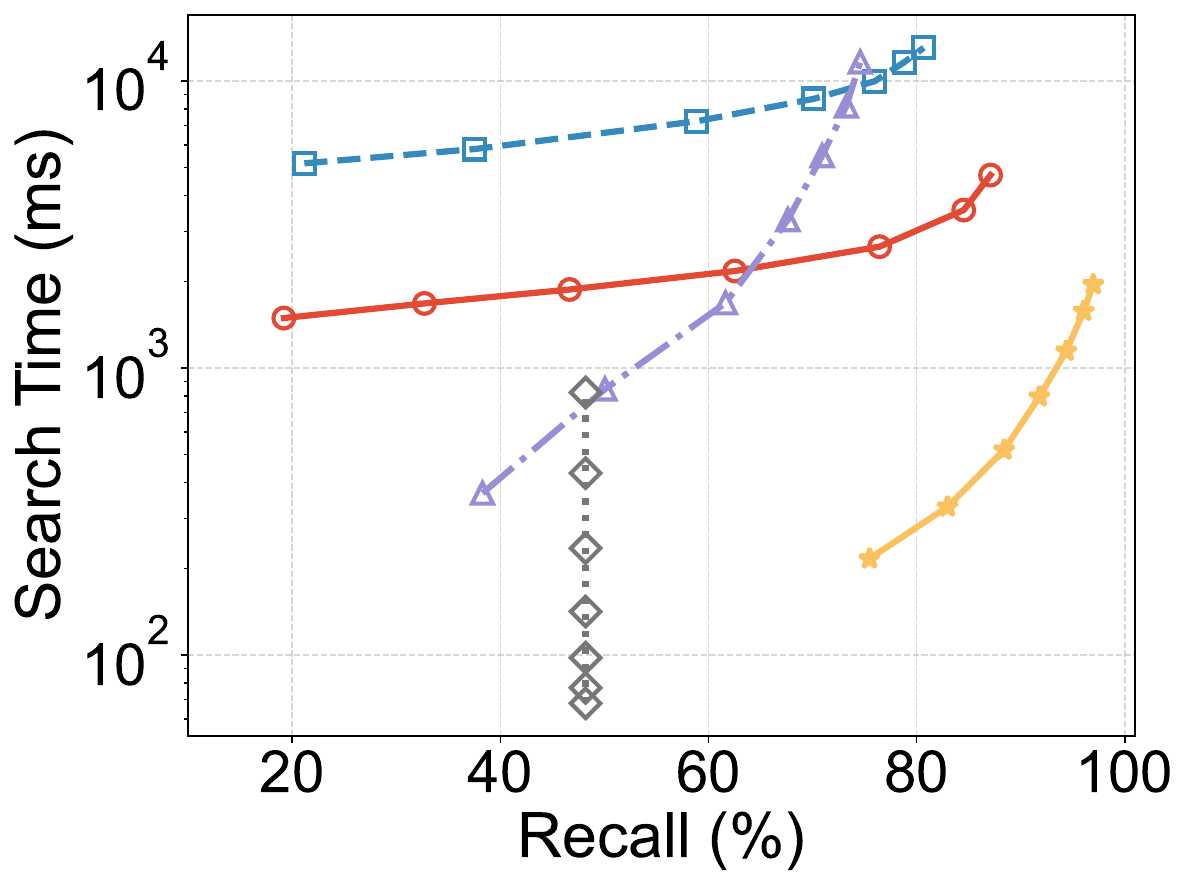}\vspace{-1.0ex}
        \caption{\lifestyle}
        \label{fig:science-default}
    \end{subfigure}
    
    \begin{subfigure}{0.23\textwidth}
        \centering
        \includegraphics[width=\textwidth]{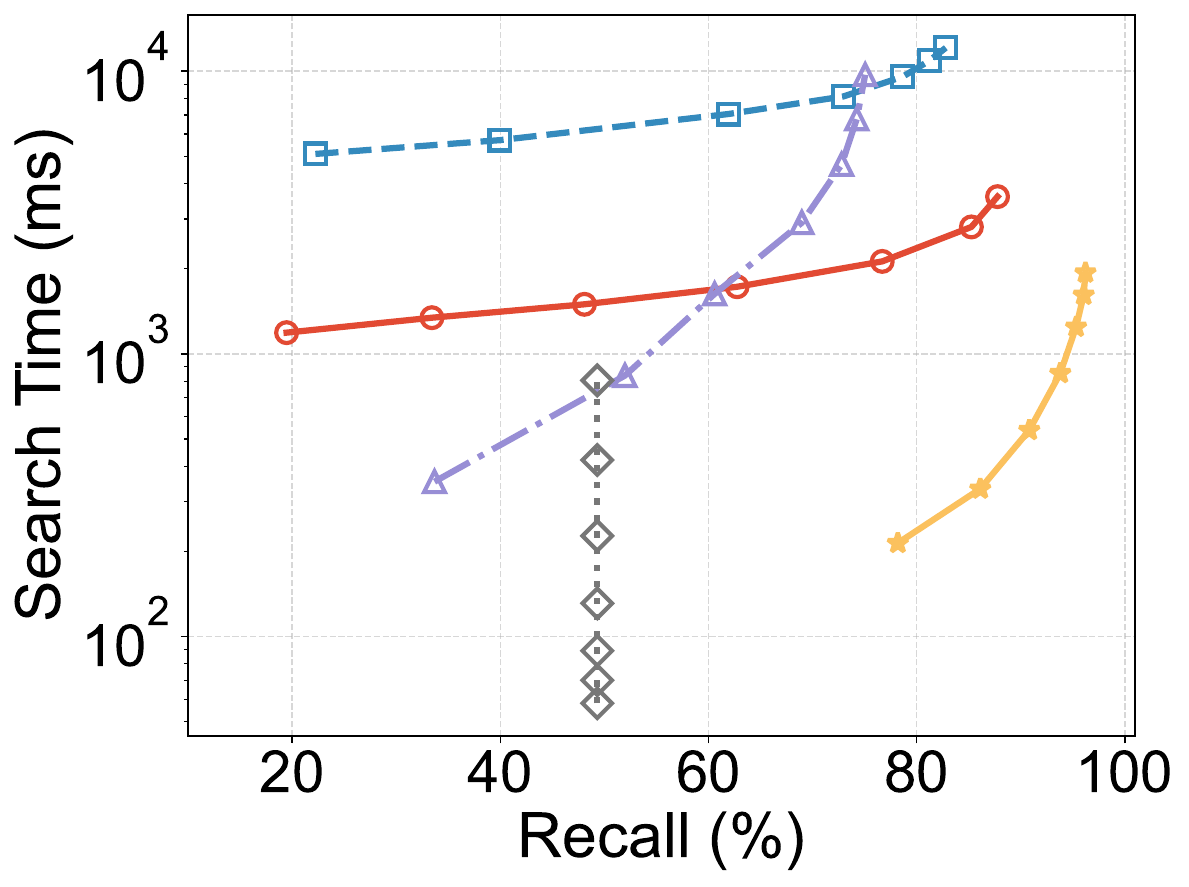}\vspace{-1.0ex}
        \caption{\technology}
        \label{fig:technology-default}
    \end{subfigure}
    \hfill
    \begin{subfigure}{0.23\textwidth}
        \centering
        \includegraphics[width=\textwidth]{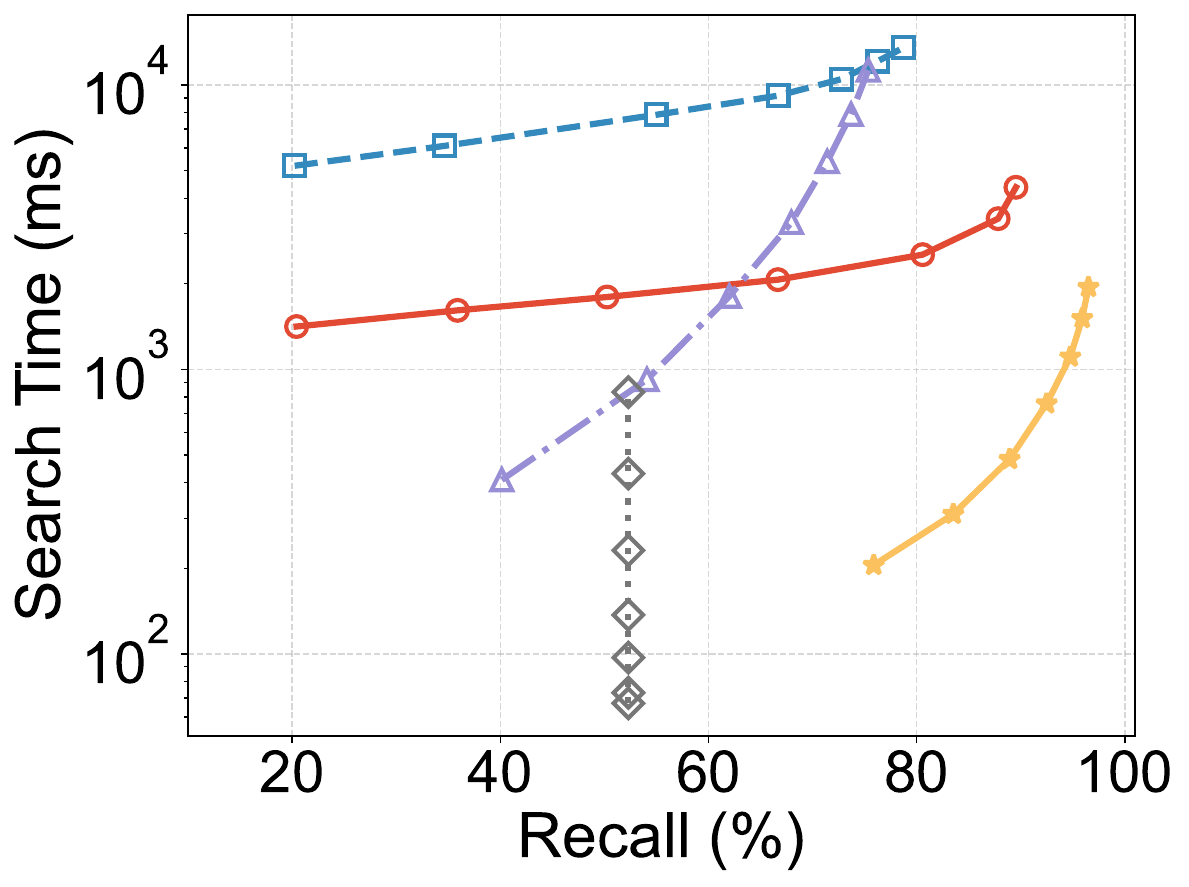}\vspace{-1.0ex}
        \caption{\pooled}
        \label{fig:pooled-default}
    \end{subfigure}
    \hfill
    \begin{subfigure}{0.23\textwidth}
        \centering
        \includegraphics[width=\textwidth]{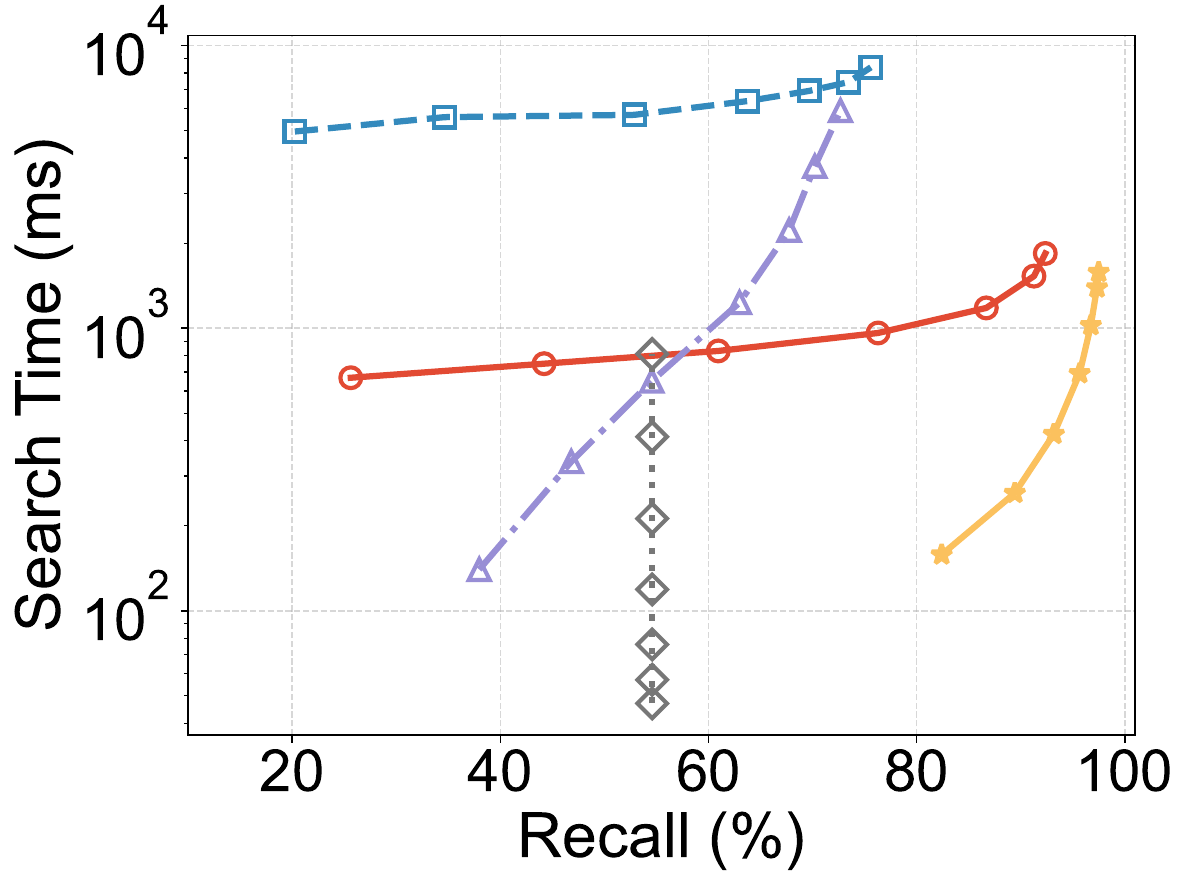}\vspace{-1.0ex}
        \caption{\msmacro}
        \label{fig:msmacro-default}
    \end{subfigure}
    \hfill
    \begin{subfigure}{0.23\textwidth}
        \centering
        \includegraphics[width=0.73\textwidth]{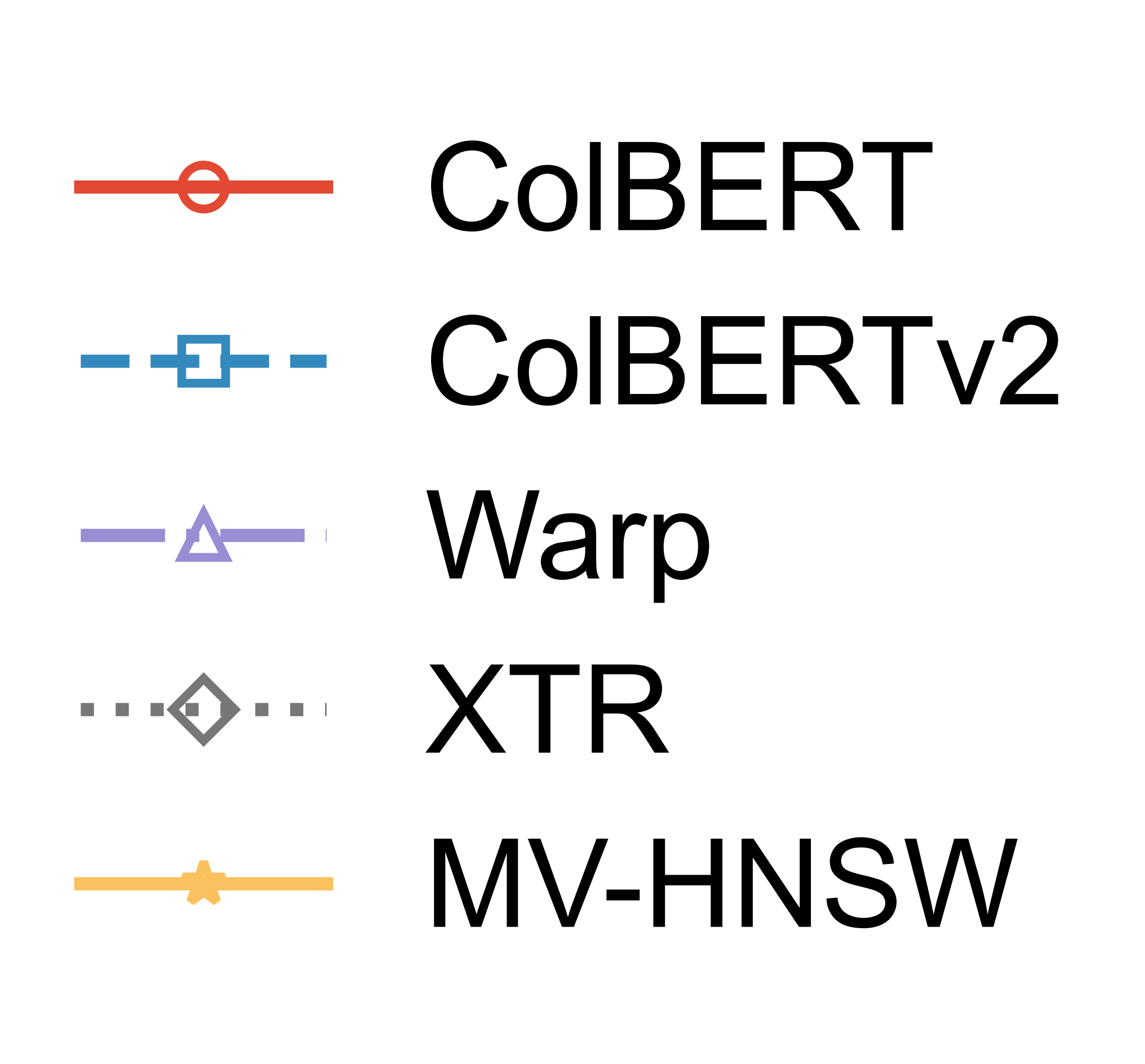}
        \caption*{}
    \end{subfigure}
    \vspace{-2.5ex}
    \caption{Search performance on all seven real-world datasets (with default query parameter $k=128$)}\label{fig:search-performance-default}
    \ybh{\vspace{-3ex}}
\end{figure*}

\begin{figure*}[t]
    \centering
    \begin{subfigure}{0.24\textwidth}
        \centering
        \includegraphics[width=\textwidth]{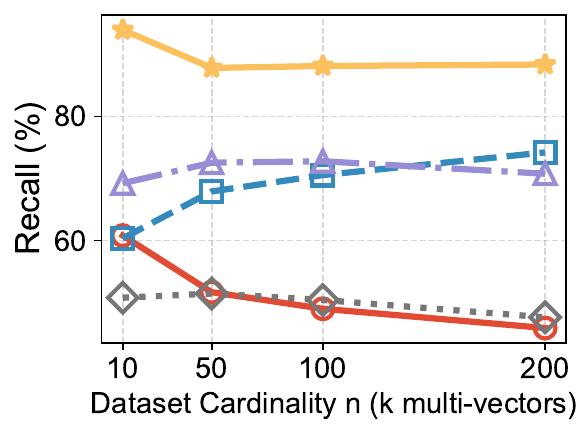}\vspace{-1.0ex}
        \caption{Recall (\pooled{})}
        \label{fig:pooled-varied-datasetsize-recall}
    \end{subfigure}
    \hfill
    \begin{subfigure}{0.24\textwidth}
        \centering
        \includegraphics[width=\textwidth]{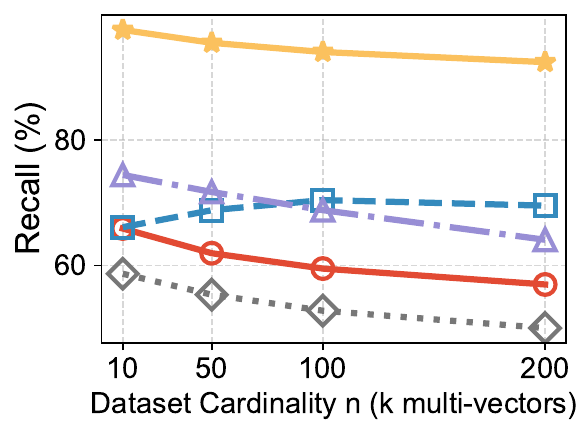}\vspace{-1.0ex}
        \caption{Recall (\msmacro{})}
        \label{fig:msmacro-varied-datasetsize-recall}
    \end{subfigure}
    \hfill
    \begin{subfigure}{0.24\textwidth}
        \centering
        \includegraphics[width=\textwidth]{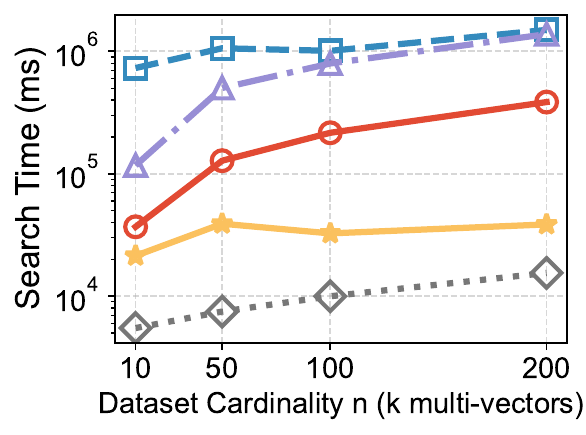}\vspace{-1.0ex}
        \caption{Time (\pooled{})}
        \label{fig:pooled-varied-datasetsize-time}
    \end{subfigure}
    \hfill
    \begin{subfigure}{0.24\textwidth}
        \centering
        \includegraphics[width=\textwidth]{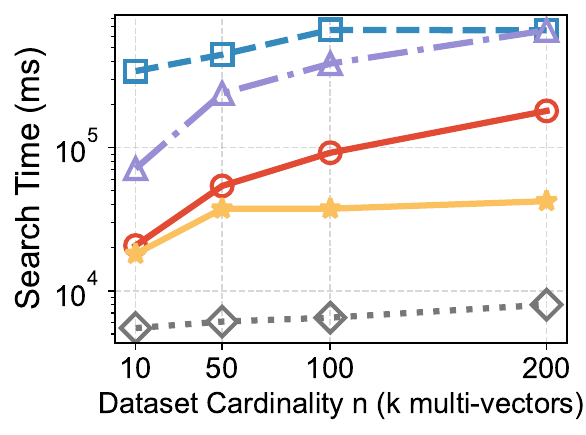}\vspace{-1.0ex}
        \caption{Time (\msmacro{})}
        \label{fig:msmacro-varied-datasetsize-time}
    \end{subfigure}
    \vspace{-2.5ex}
    \caption{Scalability test with varying dataset cardinality $n$}\label{fig:varied-dataset-size}
    \ybh{\vspace{-3ex}}
\end{figure*}

\fakeparagraph{Compared Algorithms} 
We select four representative methods for multi-vector similarity search as the major competitors, including \colbert \cite{khattab2020colbert}, \colbertvii \cite{santhanam2022colbertv2}, \xtr \cite{lee2023rethinking}, and \warp \cite{scheerer2025warp}.
The parameter configurations for each algorithm follow their implementation recommendations.

\begin{itemize} 
    \item \textbf{\colbert} \cite{khattab2020colbert}: This work introduce the multi-vector similarity search problem and the $\MaxSim$ operator. This algorithm retrieves $\alpha \cdot k$ candidate \zeng{token} vector for each query \zeng{token} vector using HNSW \cite{malkov2018efficient}, then re-ranks the corresponding multi-vectors by computing the exact $\operatorname{\USim}$ value.
    Following the original paper, we set $\alpha=0.5$. 
    \zeng{The HNSW parameters are set to $M = 32$ (node degree) and $efConstruction = efSearch = 40$ (candidate list sizes).}

    \item \textbf{\colbertvii} \cite{santhanam2022colbertv2}: This work introduces residual compression that clusters embeddings to reduce storage, along with a centroid-based index to improve  efficiency. We adopt the open-source implementation \cite{colbertviicode} with default parameters.
    
    \item \textbf{\xtr} \cite{lee2023rethinking}: This solution was developed by Google DeepMind and inspired recent state-of-the-art (\eg WARP \cite{scheerer2025warp} and ColXTR \cite{bhat2025xtr}).
    \zeng{Like \colbert{}, \xtr{} retrieves candidate multi-vectors by collecting token vectors similar to each query token vector.
    However, unlike the re-rank procedure of \colbert{}, \xtr{} computes an approximate $\operatorname{\USim}$ value using only the token vectors already retrieved.}
    
    \item \textbf{\warp} \cite{scheerer2025warp}: \warp is a state-of-the-art solution designed for large-scale datasets using the IVF index. 
    We set $n_{probe}=32$ (clusters probed), $t'=40000$ (cumulative cluster size threshold), and $b=4$ (residual compression rate).
\end{itemize}

Beyond these algorithms, we also compare our solution with Qdrant \cite{qdrant}, a vector database that provides a ColBERT-based solution for multi-vector similarity search. Although Qdrant incorporates system-level optimizations, experimental results show that our {\IndexName} achieves superior search performance (see our full paper \cite{mvhnsw-2026} for details).

\fakeparagraph{Metrics} 
We assess the performance of all methods based on two criteria: \textit{search performance} and \textit{index construction cost}. 
For \textit{search performance}, we use \textbf{recall} (defined in \equref{equ:recall}) to measure the accuracy of the retrieved results, and \textbf{search time} (average time per query) to measure time efficiency.
For \textit{index construction cost}, we measure \textbf{construction time} and \textbf{index size}. 

\fakeparagraph{Implementation and Environment} 
All methods, including our proposed approach and baselines, are implemented in C++ and Python, with C++ components compiled using g++. 
Experiments are conducted on a server equipped with 32 AMD Ryzen 9 5950X 3.40GHz Processor with 128 GB RAM, running Ubuntu 24.04.2.
For index construction, our algorithm leverages multi-threading, while all other methods and all search experiments are conducted in a single-threaded environment. 
Our source code and datasets are publicly  available on GitHub~\cite{code}.



\vspace{-1ex}
\subsection{Overall Search Performance}
We evaluate search performance by measuring recall and search time and illustrating the results using \textit{recall-time curves}. 
A \textit{time-recall curve} visualizes the trade-off between accuracy and efficiency, with recall on the x-axis and search time (in milliseconds) on the y-axis. 
This representation highlights how much time an algorithm requires to achieve a given recall level. 
Consequently, a curve that is shifted towards the \textit{bottom-right corner indicates superior performance}, as it achieves higher recall with lower latency.

\figref{fig:search-performance-default} shows the recall-time curves on all the datasets under the default parameter settings: the number of required nearest neighbors $k = 128$, the dimensionality of each \zeng{token} vector $d = 128$, and the number of \zeng{token} vectors in each multi-vector $c = 32$.
As the results show, our method \MethodName{} consistently outperforms the baselines across all datasets.


\textbf{Recall.} 
Our \ybh{\MethodName} achieves substantially higher recall than the compared baselines. 
On all seven datasets, it attains over 90\% recall with minimal latency. 
In contrast, the baselines generally fall into two categories: \textit{high-latency} and \textit{ultra-fast} methods.
High-latency baselines, such as \colbertvii, prioritize recall but incur significant computational overhead.
When fixing search latency at \zeng{1.2 to 3.9 seconds}, their recall saturates at around 80\%, and even at this level, their latency remains 2.4--25.2$\times$ higher than that of our method at 90\% recall. 
Ultra-fast baselines (represented by \xtr) sacrifice recall (hovering around 50\%) for extremely low latency (10 to 100 ms) but fail to scale to higher recall tiers.

\textbf{Search Latency.} 
Our method demonstrates superior efficiency when targeting high recall. 
While baseline methods exhibit explosive latency growth as recall increases, our approach maintains consistently low latency.
For example, to reach a target recall of 90\% across all datasets, our method achieves a speedup of 5.6--14.0$\times$ over the fastest competing baseline.

\begin{figure*}[t]
\vspace{-0.5ex}
\includegraphics[width=0.65\textwidth]{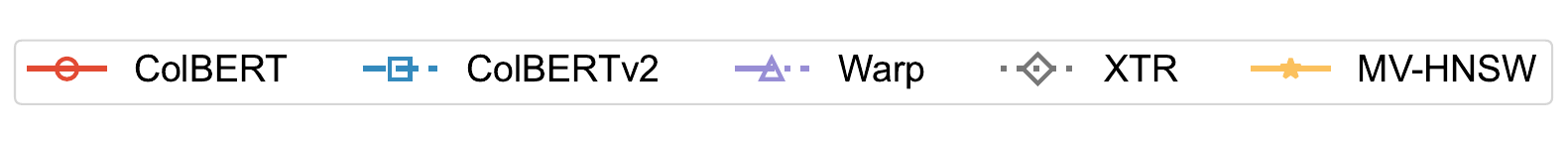}
\vspace{-1.5ex}
    
    \centering
    \begin{subfigure}{0.24\textwidth}
        \centering
        \includegraphics[width=\textwidth]{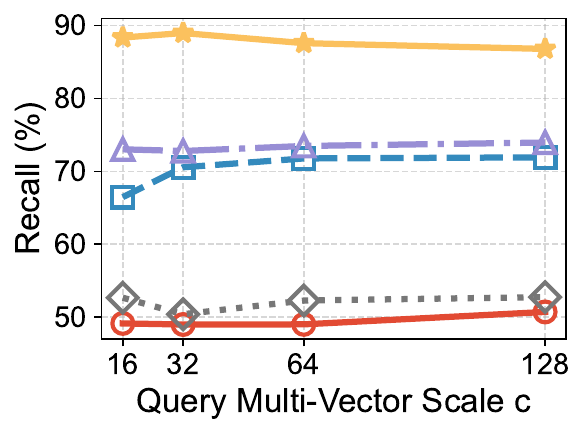}\vspace{-1.0ex}
        \caption{Recall (\pooled{})}
        \label{fig:pooled-varied-q-recall}
    \end{subfigure}
    \begin{subfigure}{0.24\textwidth}
        \centering
        \includegraphics[width=\textwidth]{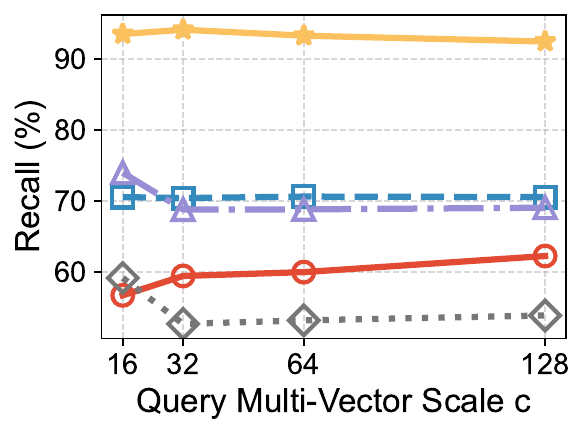}\vspace{-1.0ex}
        \caption{Recall (\msmacro{})}
        \label{fig:msmacro-varied-q-recall}
    \end{subfigure}
    \begin{subfigure}{0.24\textwidth}
        \centering
        \includegraphics[width=\textwidth]{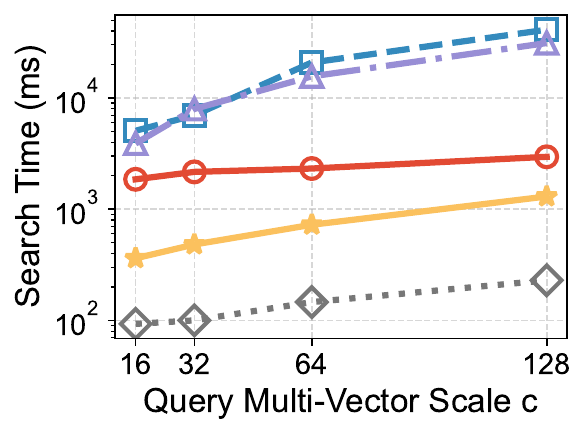}\vspace{-1.0ex}
        \caption{Time (\pooled{})}
        \label{fig:pooled-varied-q-time}
    \end{subfigure}
    \begin{subfigure}{0.24\textwidth}
        \centering
        \includegraphics[width=\textwidth]{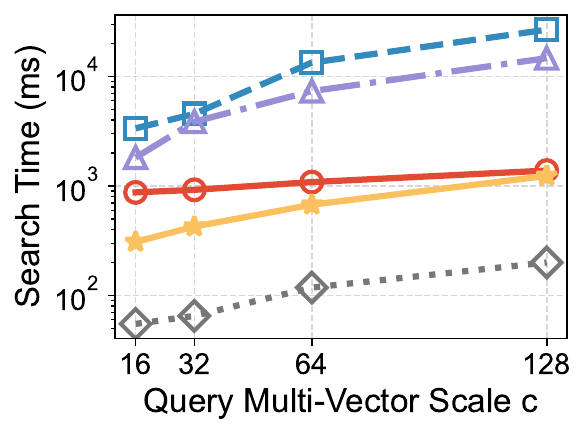}\vspace{-1.0ex}
        \caption{Time (\msmacro{})}
        \label{fig:msmacro-varied-q-time}
    \end{subfigure}
    \vspace{-2.5ex}
    \caption{Scalability test with varying the number $c$ of \zeng{token} vectors per query multi-vector}\label{fig:varied-qlen}
    \vspace{-3ex}
\end{figure*}

\begin{figure*}[t]
\begin{subfigure}{0.24\textwidth}
        \centering
        \includegraphics[width=\textwidth]{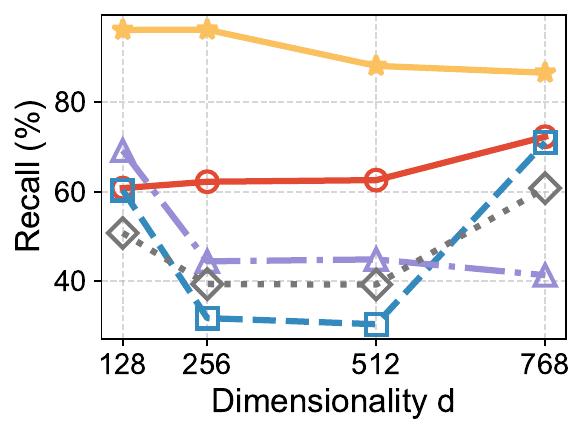}\vspace{-1.0ex}
        \caption{Recall (\pooled{})}
        \label{fig:pooled-varied-dim-recall}
    \end{subfigure}
    \begin{subfigure}{0.24\textwidth}
        \centering
        \includegraphics[width=\textwidth]{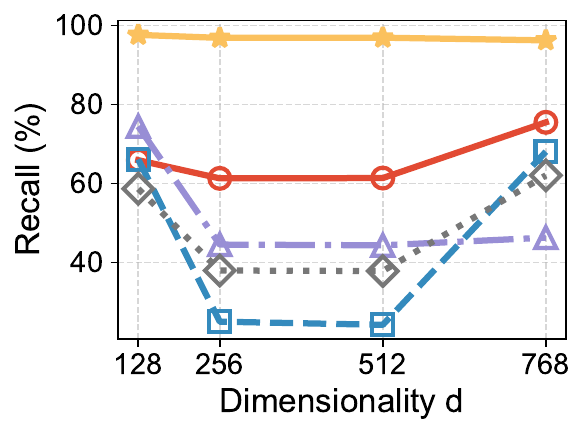}\vspace{-1.0ex}
        \caption{Recall (\msmacro)}
        \label{fig:msmacro-varied-dim-recall}
    \end{subfigure}
    \centering
    \begin{subfigure}{0.24\textwidth}
        \centering
        \includegraphics[width=\textwidth]{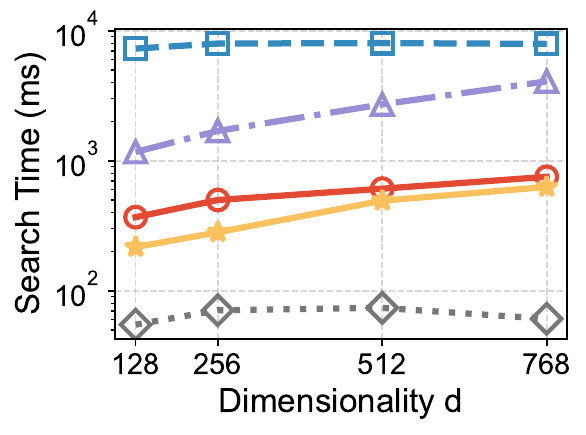}\vspace{-1.0ex}
        \caption{Time (\pooled{})}
        \label{fig:pooled-varied-dim-time}
    \end{subfigure}
    \begin{subfigure}{0.24\textwidth}
        \centering
        \includegraphics[width=\textwidth]{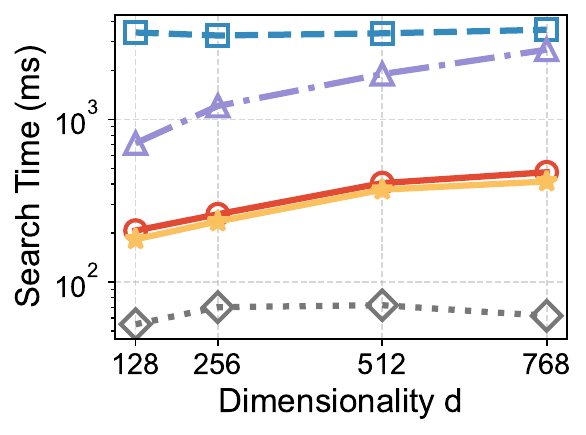}\vspace{-1.0ex}
        \caption{Time (\msmacro{})}
        \label{fig:msmacro-varied-dim-time}
    \end{subfigure}
    \vspace{-2.5ex}
    \caption{Scalability test with varying the dimensionality $d$ of \zeng{token} vectors}\label{fig:varied-dim}
    \vspace{-3ex}
\end{figure*}

\begin{figure*}[t]
    \centering
    \begin{subfigure}{0.24\textwidth}
        \centering
        \includegraphics[width=\textwidth]{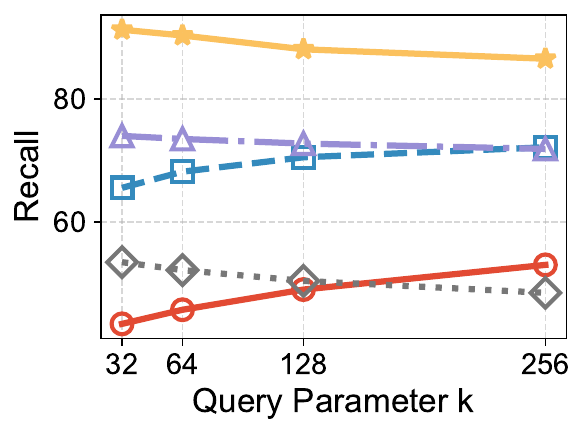}\vspace{-1.0ex}
        \caption{Recall (\pooled{})}
        \label{fig:pooled-varied-k-recall}
    \end{subfigure}
    \begin{subfigure}{0.24\textwidth}
        \centering
        \includegraphics[width=\textwidth]{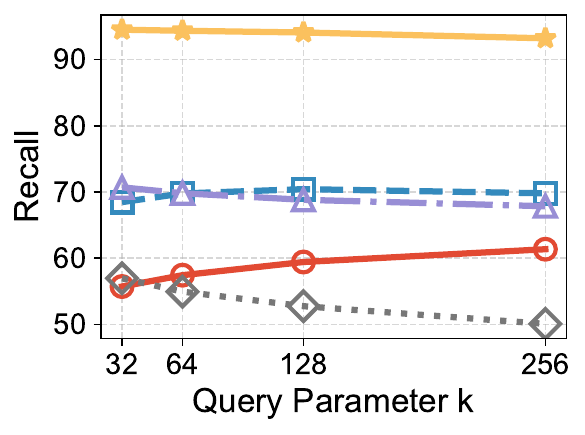}\vspace{-1.0ex}
        \caption{Recall (\msmacro{})}
        \label{fig:msmacro-varied-k-recall}
    \end{subfigure}
    \begin{subfigure}{0.24\textwidth}
        \centering
        \includegraphics[width=\textwidth]{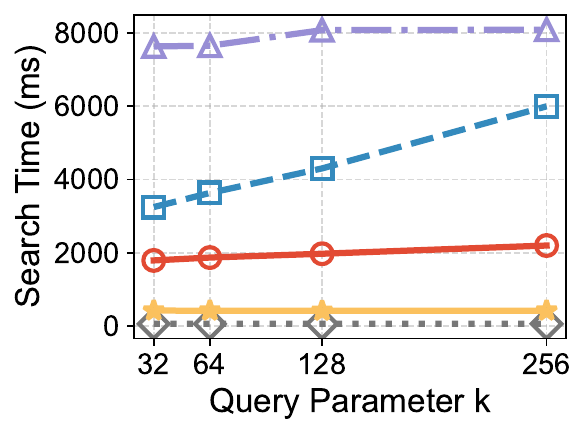}\vspace{-1.0ex}
        \caption{Time (\pooled)}
        \label{fig:pooled-varied-k-time}
    \end{subfigure}
    \begin{subfigure}{0.24\textwidth}
        \centering
        \includegraphics[width=\textwidth]{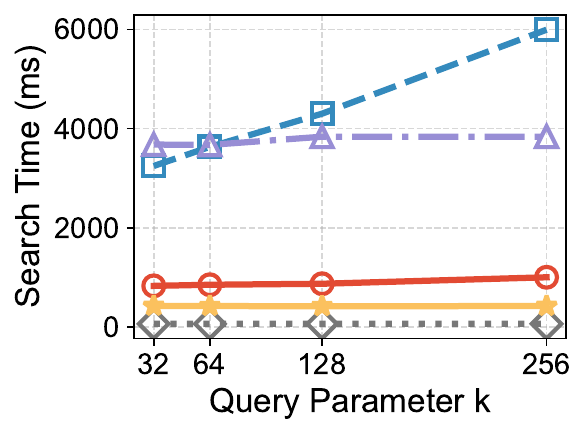}\vspace{-1.0ex}
        \caption{Time (\msmacro{})}
        \label{fig:msmacro-varied-k-time}
    \end{subfigure}
    \vspace{-2.5ex}
    \caption{Impact of the number $k$ of nearest neighbors}\label{fig:varied-query-k}
    \vspace{-3ex}
\end{figure*}

\subsection{Scalability Test}

We evaluate scalability on on \pooled{} and \msmacro{} datasets by varying three key parameters: the total number $n$ of multi-vectors, the number $c$ of \zeng{token} vectors per query multi-vector, and the dimensionality $d$ of each \zeng{token} vector. 
Due to page limitations, please refer to our full paper \cite{mvhnsw-2026} for results on the other datasets.

\subsubsection{Varying Dataset Cardinality.} 
The total number of vectors in the dataset (\ie cardinality) is the primary metric for evaluating data scalability. 
To assess this, we conduct scalability tests on dataset subsets of varying sizes, with $n \in \{1 \times 10^4, 5 \times 10^4, 1 \times 10^5, 5 \times 10^5\}$, and compare our method against the baselines.

\textbf{Latency Scaling.} 
Across all datasets, our method exhibits slow and stable growth in search latency as the dataset size increases. 
Specifically, the latency growth trend remains logarithmic in log-scale coordinates, which closely aligns with the time complexity analysis presented in \secref{sec:search-complexity}. 
Overall, our method achieves a significant latency advantage over all baselines, with the sole exception of \xtr{}.
However, the low recall of \xtr{} undermines its practicality, underscoring the importance of the recall-latency balance achieved by our approach.

\textbf{Recall Stability.} 
Our method maintains robust recall performance as the dataset scale expands. 
The overall recall trend of our \IndexName remains consistently high, staying at approximately 90\% across all dataset sizes and vastly outperforming all compared algorithms.
This demonstrates the superior scalability of our {\IndexName} in preserving search accuracy over large-scale datasets.

\subsubsection{Varying the Number of \zeng{Token} Vectors Per Query Multi-Vector.}
In practice, the number $c$ of \zeng{token} vectors per query multi-vector directly impacts search latency. Accordingly, we evaluate search performance by varying $c \in \{16, 32, 64, 128\}$ on the datasets.
\figref{fig:varied-qlen} illustrates the search time and recall results, respectively.

\textbf{Search Latency.} 
Our method exhibits efficient sub-linear time complexity as the number of \zeng{token} vectors per query increases. 
It consistently outperforms all compared algorithms except \xtr{}.
As the parameter $c$ doubles, the growth rate of our search time remains significantly lower than that of the baselines, demonstrating a sub-linear trend that aligns with our time complexity analysis.
When $c$ reaches the maximum $128$, our method achieves a speedup up to 2.3$\times$ over \colbert{} and a speedup of 11.9--31.6$\times$ over \warp{} and \colbertvii{}. 
However, as discussed below, \xtr{}'s extreme speed comes at the cost of severely limited recall.

\textbf{Recall Stability.} 
Our method maintains exceptional recall stability regardless of the number $c$ of \zeng{token} vectors per query, consistently achieving approximately 90\% recall across all $c$ values.
In contrast, the maximum recall of other algorithms does not exceed 80\%.
Notably, while \xtr{} exhibits high efficiency, its recall remains below 60\% across the tests and drops below 50\% at certain $c$ values.

\vspace{-2ex}
\subsubsection{Search Performance on Varied Dimensionality.} 
Dimensionality $d$ is another natural metric for scalability evaluation, and we evaluate the search performance by varying $d \in \{128, 256, 512, 768\}$ over a sample of each dataset. 
\figref{fig:varied-dim} presents the corresponding recall and search time.

\textbf{Recall.} As $d$ varies, our algorithm consistently maintains recall around 95\% and never drops below 90\%, demonstrating its advantage on search accuracy. 
In contrast, the recall of the baselines fluctuates significantly between 20\% and 80\%, showing high sensitivity to dimensionality changes.

\textbf{Search Time.} Across all tested dimensions, our algorithm outperforms all baselines in terms of search time, with the sole exception of \xtr{}. Although \xtr{} achieves the optimal efficiency, its recall fluctuates drastically with varying $d$ and remains below 60\% on most datasets.

\vspace{-2ex}
\subsection{Impact of Query Configuration}

This experiment evaluates the impact of the query parameter \(k\) and a different similarity metric on the \pooled{} and \msmacro{} datasets. 
Due to page limitations, please refer to \cite{mvhnsw-2026} for other results.

\subsubsection{Impact of Query Parameter $k$.}
\(k\) controls the number of nearest neighbors returned. To evaluate sensitivity under varying \(k\), we conduct experiments with \(k \in \{32, 64, 128, 256\}\) in \figref{fig:varied-query-k}. 

\textbf{Search Time.} 
The search time of our method is stable across different values of $k$. 
Unlike \colbertvii{}, which exhibits (nearly) linear search time growth as $k$ increases, our search time remains nearly flat and constant even as $k$ doubles. 
Moreover, our method is significantly faster than all high-recall baselines, such as \ybh{\colbertvii{} and \warp{}}.
For example, it achieves an 6.9--19.7$\times$ speedup over \colbertvii{}. 
While \xtr{} achieves lower latency, it sacrifices too much accuracy to be practically viable, particularly as its recall degrades further with larger $k$.

\textbf{Recall.} 
Our method strictly dominates all baselines in recall as $k$ scales. 
It consistently maintains a recall of approximately 90\%, surpassing the 80\% ceiling observed across all compared algorithms. 
Furthermore, our method exhibits strong stability: as $k$ grows exponentially, the recall curve remains smooth and shows minimal degradation. This robustness is attributed to the effectiveness of our underlying graph-based index structure and optimized search algorithm, which ensure that the candidate set captures the majority of ground truth objects efficiently.

\subsubsection{Beyond MaxSim Multi-vector Similarity.}

We conducted experiments on another multi-vector similarity function, aggregate $\gamma$NN \cite{lee2023rethinking}. Since other algorithms do not natively support aggregated \(\gamma\)NN, we only compare against \xtr{}.

As shown in \tabref{tab:gamma-performance}, our algorithm achieves a recall of over 94.9\% across both datasets, substantially outperforming \xtr{}, which attains only 50.7\%--58.9\%. This represents a relative improvement of 60\%--87\% in recall. In terms of query latency, our algorithm is approximately 3--5\(\times\) slower than \xtr{}. However, this latency remains well within an acceptable range for practical applications, especially given the significant recall gains.


\begin{table}[t]
\caption{Experiment on Aggregate $\gamma$NN (recall: \%, time: ms)}\label{tab:gamma-performance}
\vspace{-2.5ex}
\centering
\footnotesize 
\setlength{\tabcolsep}{3.5pt} 
\begin{tabular}{l cccccccc}
\toprule
\multirow{3}{*}{\begin{tabular}[c]{@{}l@{}}Compared\\ Algorithms\end{tabular}} & \multicolumn{4}{c}{Macro} & \multicolumn{4}{c}{Pooled} \\
\cmidrule(lr){2-5} \cmidrule(lr){6-9}
& \multicolumn{2}{c}{$\gamma=2$} & \multicolumn{2}{c}{$\gamma=8$} & \multicolumn{2}{c}{$\gamma=2$} & \multicolumn{2}{c}{$\gamma=8$} \\
\cmidrule(lr){2-3} \cmidrule(lr){4-5} \cmidrule(lr){6-7} \cmidrule(lr){8-9}
& Recall & Time & Recall & Time & Recall & Time & Recall & Time \\
\midrule
Linear Scan & - & 862.6 & - & 1104.1 & - & 1589.0 & - & 1912.6 \\
\midrule
XTR & 58.9 & 51.9 & 54.3 & 51.7 & 50.7 & 54.5 & 54.3 & 51.7 \\
MV-HNSW & 96.0 & 173.7 & 96.0 & 216.1 & 94.9 & 228.8 & 95.3 & 260.3 \\
\bottomrule
\end{tabular}
\end{table}


\begin{table}[t]
\caption{Index construction (time: h, size: GB)}\label{tab:index-construction}
\vspace{-2.5ex}
\centering
\setlength{\tabcolsep}{3.5pt} 
\begin{tabular}{l cccccccc}
\toprule
\multirow{2}{*}{\begin{tabular}[c]{@{}l@{}}Compared\\ Algorithms\end{tabular}} & \multicolumn{2}{c}{\writing} & \multicolumn{2}{c}{\science} & \multicolumn{2}{c}{\pooled} & \multicolumn{2}{c}{\msmacro} \\
\cmidrule(lr){2-3} \cmidrule(lr){4-5} \cmidrule(lr){6-7} \cmidrule(lr){8-9}
& Time & Size & Time & Size & Time & Size & Time & Size \\
\midrule
\colbert{}   & 1.85 & 3.58 & 2.18 & 4.11 & 2.04 & 3.93 & 0.92 & 1.76 \\
\colbertvii{} & 0.37 & 0.50 & 0.41 & 0.22 & 1.29 & 0.55 & 0.58 & 0.25 \\
\xtr{}       & 6.62 & 3.60 & 6.09 & 4.13 & 6.58 & 3.95 & 7.74 & 1.76 \\
\warp{}      & 0.69 & 1.06 & 0.81 & 1.21 & 0.78 & 0.95 & 0.33 & 0.52 \\
\IndexName{} & 21.29 & 7.65 & 21.06 & 6.22 & 18.09 & 6.90 & 13.81 & 10.73 \\
\bottomrule
\end{tabular}
\end{table}

\vspace{-1ex}
\subsection{Experiment on Index Construction}




This experiment evaluates the index construction time and index size on on \writing{}, \science{}, \pooled{}, and \msmacro{} datasets. 
Please refer to \cite{mvhnsw-2026} for the complete results.
Overall, while our method incurs higher costs, the overhead remains manageable and acceptable for real-world datasets. 

\textbf{Index Size.} As shown in \tabref{tab:index-construction}, \colbertvii{} and \warp{}, exhibit the most compact index sizes. In contrast, algorithms based on token-vector HNSW (\colbert{} and \xtr{}) incur an index size roughly \(3.4\times\) larger than that of \warp{}. Our {\IndexName} yields a slightly larger index size than \colbert{} and \xtr{}, but remains well within an acceptable range.

\textbf{Construction Time.} Among the baselines, \colbert{}, \colbertvii{}, and \warp{} demonstrate exceptionally high efficiency in index construction. The average construction time for \xtr{} is approximately \(9.5\times\) that of \warp{} and \(3.8\times\) that of \colbert{}. To achieve better online search performance, our proposed algorithm incurs a longer index construction time than these baselines. Notably, our ablation study shows that this overhead can be significantly reduced by using more threads during construction.




\vspace{-1ex}
\subsection{Ablation Study}

We conduct ablation studies to evaluate the effectiveness of the proposed optimization methods: \textit{efficient {\USim} computation} (\secref{subsec:EUC}) and \textit{augmented search layer} (\secref{sec:SearchAugumentation}).

\subsubsection{Optimization \#1: Accelerated {\USim} Computation.} 
This optimization accelerates the computation of the {\USim} operator, thereby improving both index construction time and search latency. 
In this ablation study, we focus on evaluating its impact on index construction, as the acceleration effect is identical for both phases: the optimized {\USim} computation is invoked in the same manner during indexing and searching. 

\figref{fig:ablation-euc} illustrates the index construction performance on the Pooled, Science, and Writing datasets, comparing results with and without optimization \#1. 
To isolate the overhead introduced by multi-vectors, we include an extra baseline (named ``\zeng{token}-only HNSW'') that builds a standard HNSW index over \textit{all \zeng{token} vectors} in each dataset. 
This baseline serves as a reference point for understanding the computational cost of our multi-vector index.

Applying optimization \#1 yields a substantial reduction in index construction time. 
For constructing a 128-dimensional multi-vector indexs with 10,000 data objects on these datasets, the construction time of our {\IndexName} is reduced to 3.1--3.3$\times$ that of \zeng{token}-only HNSW. 
Without optimization \#1, this overhead increases to 10.9--13.6$\times$. 
These results confirm that optimization \#1 effectively mitigates the computational complexity introduced by the multi-vector structure in the graph index.


\begin{figure}[h]
    \vspace{-0.5ex}
    \centering
    \includegraphics[width=0.95\columnwidth]{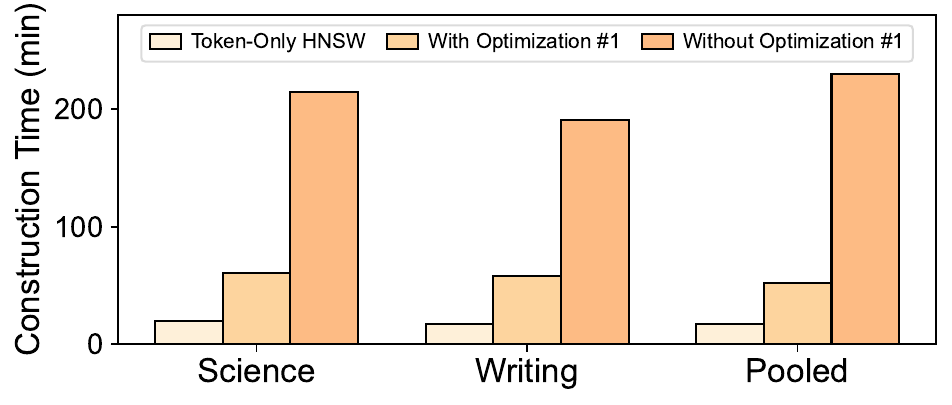}
    \vspace{-2.5ex}
    \caption{Ablation study on Optimization \#1}\label{fig:ablation-euc}
    \vspace{-2ex}
\end{figure}

\subsubsection{Optimization \#2: Augmented Search Layer.} 
This optimization enhances search accuracy by introducing an auxiliary index and a dynamic candidate expansion strategy. 
We compare the standard HNSW search algorithm \cite{malkov2018efficient} with our augmented search strategy on the same {\IndexName} structure constructed over the \lifestyle{}, \writing{}, \technology{}, and \msmacro{} datasets.

\figref{fig:ablation-augmentedsearch} presents the recall results with and without optimization \#2 across all datasets. 
The results demonstrate that our augmented search algorithm consistently improves recall compared to the standard search method.
This is because our algorithm explores candidates that are topologically distant or disconnected from the traversed nodes, whereas the baseline, strictly following connected edges, inevitably becomes trapped in local optima.

\begin{figure}[h]
    \vspace{-2ex}
    \centering
    \includegraphics[width=0.85\columnwidth]{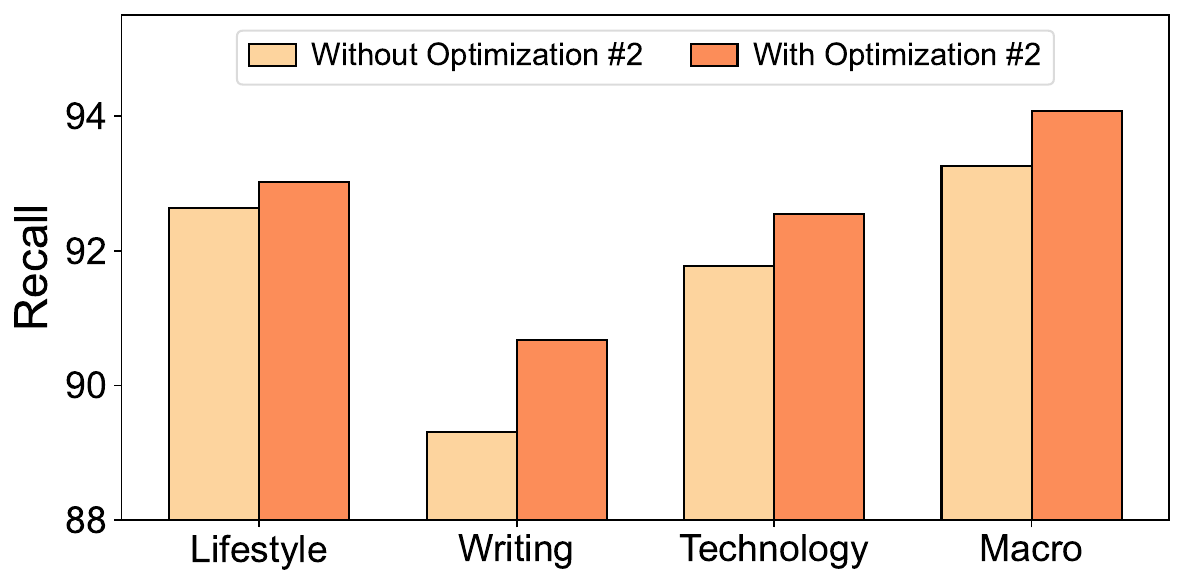}
    \vspace{-3.5ex}
    \caption{Ablation study on Optimization \#2}\label{fig:ablation-augmentedsearch}
    \vspace{-2.5ex}
\end{figure}

\vspace{-1ex}
\subsection{Summary of Major Findings}
Our experimental study yields the following key findings:
\begin{itemize}[noitemsep]
\item \textbf{Balanced Search Performance}. 
Our \IndexName{} achieves the best trade-off between recall and efficiency. 
It usually maintains over 90\% recall with competitive search time and can be 5.6--14.0$\times$ faster than the baselines at the same recall level.
As dataset scale and query parameters vary, our method maintains stable recall with only modest growth in the search latency.


\item \textbf{Acceptable Offline Trade-offs}. 
The superior online search performance comes at the increased cost of offline indexing. Although our index construction time and size exceed those of some baselines, they remain practical for real-world deployments, thanks to our accelerated $\operatorname{\USim}$ computation optimization, which achieves up to 10.9--13.6$\times$ speedup.

\item \textbf{Baseline Comparison}. 
Among the baselines, \colbert{} strikes the most balanced trade-off between recall and search time. \xtr{} prioritizes efficiency at the expense of recall, while \colbertvii{} and \warp{} sacrifice efficiency to achieve moderate recall levels. Overall, none of these baselines achieves consistently satisfactory search performance.
\end{itemize}

\vspace{-1.5ex}
\section{\MakeUppercase{Related Work}}	\label{sec:related-work}

We review related work in the following two categories: \textit{multi-vector similarity search} and \textit{approximate nearest neighbor search}.

\fakeparagraph{Multi-Vector Similarity Search} 
Prior studies on this problem adopt a filter-and-refine framework.
The filter phase is used to prune the entire dataset to a small set of candidate multi-vectors, avoiding the expensive {\USim} computation over all multi-vectors. 
In the refine phase, the exact {\USim} score is computed between the query and all candidates generated to determine the final results.

Based on the filtering strategy, existing work can be broadly classified into three categories: \textit{bound estimation}, \textit{clustering-based}, and \textit{lexical-based} methods.
\textit{Bound estimation methods}~\cite{khattab2020colbert, lee2023rethinking, li2023slim} estimate lower bounds of {\USim} using single-vector distances between query and data objects. 
\textit{Clustering-based methods}~\cite{gao2021coil, li2023citadel} focus on \zeng{token-vector-level} matching, pruning objects based on single-vector distances when their original tokens exactly match or are semantically similar.
\textit{Lexical-based methods} \cite{santhanam2022colbertv2, santhanam2022plaid, scheerer2025warp} rely on clustering and quantization techniques. 
These methods exploit semantic locality by clustering \zeng{token} vectors within each multi-vector and quantizing their residuals to reduce storage cost. 
During search, \zeng{multi-vectors} are pruned if they contain no \zeng{token} vectors belonging to centroids close to the query's \zeng{token} vectors.
We select the state-of-the-art method from each category as our primary competitors: \colbert{} \cite{khattab2020colbert}, \colbertvii{} \cite{santhanam2022colbertv2}, \xtr{} \cite{lee2023rethinking}, and \warp{} \cite{scheerer2025warp}.

Existing methods prune multi-vectors by leveraging \zeng{token} vectors but often overlook the rich inter-multi-vector information, leading to the inadvertent pruning of relevant candidates and thus limiting search performance. 
By contrast, our solution organizes data objects at the multi-vector level and navigates the search process accordingly, thereby enhancing both recall and efficiency.

\fakeparagraph{Approximate Nearest Neighbor Search} 
Approximate nearest neighbor search (ANNS) indexes are extensively leveraged in existing multi-vector similarity search algorithms. 
Josef \etal \cite{sivic2003video} pioneered the use of k-means clustering for high-dimensional ANN search and introduced the Inverted File (IVF) index to record vector IDs belonging to each cluster centroid. 
Jegou \etal \cite{jegou2010product} advanced this solution by representing vectors as residuals to their cluster centroids and applying Product Quantization (PQ) to compress these residuals.
Wang \etal \cite{wang2020deltapq} later observed that residuals are more concentrated within the IVF framework, enabling further compression for improved efficiency. 
In multi-vector similarity search, clustering-based methods \cite{khattab2020colbert, lee2023rethinking, li2023slim} adopt a similar filtering paradigm. Additionally, some lexical-based methods, such as COIL \cite{gao2021coil}, employ IVF to manage vector IDs associated with each token type.

Graph-based indexes offer a superior accuracy-efficiency trade-off among ANN methods \cite{wang2021comprehensive}, with representative solutions like HNSW \cite{malkov2018efficient}, LSH-APG~\cite{lshapg}, and SHG~\cite{DBLP:journals/pvldb/GongZC25}.
Among these, HNSW is widely regarded as the state-of-the-art \cite{pan2024survey} and is widely used in bound estimation methods for multi-vector similarity search to manage \zeng{token} vectors and derive heuristic information. 
However, despite the success of graph-based indexes in the ANN setting, no existing work applies graph indexing directly to multi-vector data. 




\vspace{-2ex}
\section{\MakeUppercase{Conclusion}}\label{sec:CONCLUSION} 


Fine-grained semantic retrieval in RAG and LLM systems increasingly demands accurate and scalable multi-vector similarity search. Existing approaches predominantly follow a filter-and-refine framework: they first apply heuristics based on single-vector similarity to filter candidates, then refine the results. 
However, this heuristic filter stage ignores the rich information embedded in multi-vector data, leading to solutions that are either inaccurate or inefficient.
To address these limitations, we formalize the Unified Multi-Vector Similarity Search problem.
We propose {\IndexName}, a hierarchical graph index tailored for multi-vector data using a novel edge-weight function that rigorously satisfies three core properties: symmetry, cardinality robustness, and query consistency.
We further develop optimized algorithms for index construction and query processing, and provide a thorough complexity analysis.
Experimental results on seven real-world datasets demonstrate that our solution achieves a superior balance between accuracy and efficiency compared to state-of-the-art methods. 
{\IndexName} accelerates search latency by up to \zeng{5.6--14.0$\times$} while stably maintaining over 90\% recall.


\balance
\bibliographystyle{ACM-Reference-Format}
\bibliography{references}

@inproceedings{khattab2020colbert,
  title={Colbert: Efficient and effective passage search via contextualized late interaction over bert},
  author={Khattab, Omar and Zaharia, Matei},
  booktitle={Proceedings of the 43rd International ACM SIGIR conference on research and development in Information Retrieval},
  pages={39--48},
  year={2020}
}

@inproceedings{santhanam2022colbertv2,
  title={ColBERTv2: Effective and Efficient Retrieval via Lightweight Late Interaction},
  author={Santhanam, Keshav and Khattab, Omar and Saad-Falcon, Jon and Potts, Christopher and Zaharia, Matei},
  booktitle={Proceedings of the 2022 Conference of the North American Chapter of the Association for Computational Linguistics: Human Language Technologies},
  pages={3715--3734},
  year={2022}
}

@inproceedings{santhanam2022plaid,
  title={PLAID: an efficient engine for late interaction retrieval},
  author={Santhanam, Keshav and Khattab, Omar and Potts, Christopher and Zaharia, Matei},
  booktitle={Proceedings of the 31st ACM International Conference on Information \& Knowledge Management},
  pages={1747--1756},
  year={2022}
}

@inproceedings{louis2025colbert,
  title={ColBERT-XM: A Modular Multi-Vector Representation Model for Zero-Shot Multilingual Information Retrieval},
  author={Louis, Antoine and Saxena, Vageesh Kumar and van Dijck, Gijs and Spanakis, Gerasimos},
  booktitle={Proceedings of the 31st International Conference on Computational Linguistics},
  pages={4370--4383},
  year={2025}
}

@inproceedings{faysse2025colpali,
  title={ColPali: Efficient Document Retrieval with Vision Language Models},
  author={Faysse, Manuel and Sibille, Hugues and Wu, Tony and Omrani, Bilel and Viaud, Gautier and Hudelot, C{\'e}line and Colombo, Pierre},
  booktitle={ICLR},
  year={2025}
}

@inproceedings{park2025scv,
  title={SCV: Light and Effective Multi-Vector Retrieval with Sequence Compressive Vectors},
  author={Park, Cheoneum and Jeong, Seohyeong and Kim, Minsang and Lim, KyungTae and Lee, Yong-Hun},
  booktitle={Proceedings of the 31st International Conference on Computational Linguistics: Industry Track},
  pages={760--770},
  year={2025}
}

@inproceedings{karpukhin2020dense,
  title={Dense Passage Retrieval for Open-Domain Question Answering.},
  author={Karpukhin, Vladimir and Oguz, Barlas and Min, Sewon and Lewis, Patrick SH and Wu, Ledell and Edunov, Sergey and Chen, Danqi and Yih, Wen-tau},
  booktitle={EMNLP (1)},
  pages={6769--6781},
  year={2020}
}

@article{lee2023rethinking,
  title={Rethinking the role of token retrieval in multi-vector retrieval},
  author={Lee, Jinhyuk and Dai, Zhuyun and Duddu, Sai Meher Karthik and Lei, Tao and Naim, Iftekhar and Chang, Ming-Wei and Zhao, Vincent},
  journal={Advances in Neural Information Processing Systems},
  volume={36},
  pages={15384--15405},
  year={2023}
}

@inproceedings{scheerer2025warp,
  title={WARP: An efficient engine for multi-vector retrieval},
  author={Scheerer, Jan Luca and Zaharia, Matei and Potts, Christopher and Alonso, Gustavo and Khattab, Omar},
  booktitle={Proceedings of the 48th International ACM SIGIR Conference on Research and Development in Information Retrieval},
  pages={2504--2512},
  year={2025}
}

@inproceedings{li2023slim,
  title={Slim: Sparsified late interaction for multi-vector retrieval with inverted indexes},
  author={Li, Minghan and Lin, Sheng-Chieh and Ma, Xueguang and Lin, Jimmy},
  booktitle={Proceedings of the 46th International ACM SIGIR Conference on Research and Development in Information Retrieval},
  pages={1954--1959},
  year={2023}
}

@article{gao2021coil,
  title={COIL: Revisit exact lexical match in information retrieval with contextualized inverted list},
  author={Gao, Luyu and Dai, Zhuyun and Callan, Jamie},
  journal={arXiv preprint arXiv:2104.07186},
  year={2021}
}

@inproceedings{li2023citadel,
  title={CITADEL: Conditional token interaction via dynamic lexical routing for efficient and effective multi-vector retrieval},
  author={Li, Minghan and Lin, Sheng-Chieh and Oguz, Barlas and Ghoshal, Asish and Lin, Jimmy and Mehdad, Yashar and Yih, Wen-tau and Chen, Xilun},
  booktitle={Proceedings of the 61st Annual Meeting of the Association for Computational Linguistics (Volume 1: Long Papers)},
  pages={11891--11907},
  year={2023}
}

@inproceedings{sivic2003video,
  title={Video Google: A text retrieval approach to object matching in videos},
  author={Sivic and Zisserman},
  booktitle={Proceedings ninth IEEE international conference on computer vision},
  pages={1470--1477},
  year={2003},
  organization={IEEE}
}

@article{jegou2010product,
  title={Product quantization for nearest neighbor search},
  author={Jegou, Herve and Douze, Matthijs and Schmid, Cordelia},
  journal={IEEE transactions on pattern analysis and machine intelligence},
  volume={33},
  number={1},
  pages={117--128},
  year={2010},
  publisher={IEEE}
}

@article{wang2020deltapq,
  title={DeltaPQ: lossless product quantization code compression for high dimensional similarity search},
  author={Wang, Runhui and Deng, Dong},
  journal={{PVLDB}},
  volume={13},
  number={13},
  pages={3603--3616},
  year={2020},
  publisher={VLDB Endowment}
}

@article{wang2021comprehensive,
  title={A comprehensive survey and experimental comparison of graph-based approximate nearest neighbor search},
  author={Wang, Mengzhao and Xu, Xiaoliang and Yue, Qiang and Wang, Yuxiang},
  journal={{PVLDB}},
  volume={14},
  number={11},
  pages={1964--1978},
  year={2021},
  publisher={VLDB Endowment}
}

@article{pan2024survey,
  title={Survey of vector database management systems},
  author={Pan, James Jie and Wang, Jianguo and Li, Guoliang},
  journal={The VLDB Journal},
  volume={33},
  number={5},
  pages={1591--1615},
  year={2024},
  publisher={Springer}
}

@article{malkov2018efficient,
  title={Efficient and robust approximate nearest neighbor search using hierarchical navigable small world graphs},
  author={Malkov, Yu A and Yashunin, Dmitry A},
  journal={IEEE transactions on pattern analysis and machine intelligence},
  volume={42},
  number={4},
  pages={824--836},
  year={2018},
  publisher={IEEE}
}

@inproceedings{indyk1998approximate,
  title={Approximate nearest neighbors: towards removing the curse of dimensionality},
  author={Indyk, Piotr and Motwani, Rajeev},
  booktitle={Proceedings of the thirtieth annual ACM symposium on Theory of computing (STOC)},
  pages={604--613},
  year={1998}
}

@inproceedings{shrestha2024espn,
  title={Espn: Memory-efficient multi-vector information retrieval},
  author={Shrestha, Susav and Reddy, Narasimha and Li, Zongwang},
  booktitle={Proceedings of the 2024 ACM SIGPLAN International Symposium on Memory Management},
  pages={95--107},
  year={2024}
}

@inproceedings{nardini2024efficient,
  title={Efficient multi-vector dense retrieval with bit vectors},
  author={Nardini, Franco Maria and Rulli, Cosimo and Venturini, Rossano},
  booktitle={European Conference on Information Retrieval},
  pages={3--17},
  year={2024},
  organization={Springer}
}

@article{nguyen2016ms,
  title={Ms marco: A human-generated machine reading comprehension dataset},
  author={Nguyen, Tri and Rosenberg, Mir and Song, Xia and Gao, Jianfeng and Tiwary, Saurabh and Majumder, Rangan and Deng, Li},
  year={2016}
}

@misc{colbertviicode,
  author       = {{Stanford Future Data}},
  title        = {{ColBERT}: {O}fficial {G}itHub {R}epository},
  howpublished = {\url{https://github.com/stanford-futuredata/ColBERT}},
  year         = {2020},
  note         = {Accessed: 2026-03-01}
}

@inproceedings{bhat2025xtr,
  title={XTR meets ColBERTv2: Adding ColBERTv2 Optimizations to XTR},
  author={Bhat, Riyaz Ahmad and Sen, Jaydeep},
  booktitle={Proceedings of the 31st International Conference on Computational Linguistics: Industry Track},
  pages={358--365},
  year={2025}
}

@article{DBLP:journals/corr/abs-2312-10997,
  author       = {Yunfan Gao and
                  Yun Xiong and
                  Xinyu Gao and
                  Kangxiang Jia and
                  Jinliu Pan and
                  Yuxi Bi and
                  Yi Dai and
                  Jiawei Sun and
                  Qianyu Guo and
                  Meng Wang and
                  Haofen Wang},
  title        = {Retrieval-Augmented Generation for Large Language Models: {A} Survey},
  journal      = {CoRR},
  volume       = {abs/2312.10997},
  year         = {2023},
}

@inproceedings{wang2021milvus,
  title={Milvus: A purpose-built vector data management system},
  author={Wang, Jianguo and Yi, Xiaomeng and Guo, Rentong and Jin, Hai and Xu, Peng and Li, Shengjun and Wang, Xiangyu and Guo, Xiangzhou and Li, Chengming and Xu, Xiaohai and others},
  booktitle={Proceedings of the 2021 international conference on management of data},
  pages={2614--2627},
  year={2021}
}

@misc{weaviate2025,
  title = {Weaviate},
  url = {http://weaviate.io},
  note = {Accessed: 2026-03-01},
  year = {2026}
}

@misc{qdrant,
  title = {Qdrant},
  url = {https://qdrant.tech/},
  note = {Accessed: 2026-03-01},
  year = {2026}
}

@article{douze2025faiss,
  title={The faiss library},
  author={Douze, Matthijs and Guzhva, Alexandr and Deng, Chengqi and Johnson, Jeff and Szilvasy, Gergely and Mazar{\'e}, Pierre-Emmanuel and Lomeli, Maria and Hosseini, Lucas and J{\'e}gou, Herv{\'e}},
  journal={IEEE Transactions on Big Data},
  year={2025},
  publisher={IEEE}
}

@article{garg2025incorporating,
  title={Incorporating Token Importance in Multi-Vector Retrieval},
  author={Garg, Ankit and Shiragur, Kirankumar and Kayal, Neeraj and others},
  journal={arXiv preprint arXiv:2511.16106},
  year={2025}
}

@book{DBLP:books/mk/HanKP2011,
  author       = {Jiawei Han and
                  Micheline Kamber and
                  Jian Pei},
  title        = {Data Mining: Concepts and Techniques, 3rd edition},
  publisher    = {Morgan Kaufmann},
  year         = {2011},
}

@article{lshapg,
  author       = {Xi Zhao and
                  Yao Tian and
                  Kai Huang and
                  Bolong Zheng and
                  Xiaofang Zhou},
  title        = {Towards Efficient Index Construction and Approximate Nearest Neighbor
                  Search in High-Dimensional Spaces},
  journal      = {{PVLDB}},
  volume       = {16},
  number       = {8},
  pages        = {1979--1991},
  year         = {2023},
}

@book{DBLP:books/daglib/0012859,
  author       = {Michael Mitzenmacher and
                  Eli Upfal},
  title        = {Probability and Computing: Randomized Algorithms and Probabilistic
                  Analysis},
  publisher    = {Cambridge University Press},
  year         = {2005},
}

@article{DBLP:journals/pvldb/GongZC25,
  author       = {Zengyang Gong and
                  Yuxiang Zeng and
                  Lei Chen},
  title        = {Accelerating Approximate Nearest Neighbor Search in Hierarchical Graphs:
                  Efficient Level Navigation with Shortcuts},
  journal      = {{PVLDB}},
  volume       = {18},
  number       = {10},
  pages        = {3518--3530},
  year         = {2025},
}

@misc{mvhnsw-2026,
  author       = {Binhan Yang and Yuxiang Zeng and Hengxin Zhang and Zhuanglin Zheng and Yunzhen Chi and Yongxin Tong and Ke Xu},
  title        = {Unified and Efficient Approach for Multi-Vector Similarity Search (Full Paper)},
  howpublished = {\url{https://github.com/oldherd/MV-HNSW}},
  year         = {2026}
}

@misc{code,
  title        = {Source Code and Dataset},
  howpublished = {\url{https://github.com/oldherd/MV-HNSW}},
  year         = {2026}
}

\end{document}